\documentclass[a4paper,USenglish,cleveref, autoref, thm-restate]{lipics-v2021}


\hideLIPIcs  


\usepackage[linesnumbered, lined, boxed, commentsnumbered]{algorithm2e}
\usepackage{xpatch}
\xpretocmd{\algorithm}{\hsize=\linewidth}{}{}
\usepackage{tablefootnote}

\usepackage{todonotes}

\usepackage{hyperref}
\usepackage{amsmath}
\usepackage{amssymb}
\usepackage{amsthm}
\usepackage{stmaryrd} 
\usepackage{mathpartir}
\usepackage{mathabx}
\usepackage{adjustbox}
\usepackage{mathtools}
\usepackage{caption}
\usepackage{tikz}
\usetikzlibrary{backgrounds, automata, positioning,calc, shapes.geometric, arrows, cd, intersections, patterns.meta}
\usepackage{amscd}
\usepackage{array}
\usepackage{rotating}
\usepackage{float}
\usepackage{mathtools}
\usepackage{wrapfig}
\usepackage{hyperref}
\DeclarePairedDelimiter{\ceil}{\lceil}{\rceil}

\makeatletter
\newbox\xrat@below
\newbox\xrat@above
\newcommand{\xrightarrowtail}[2][]{%
  \setbox\xrat@below=\hbox{\ensuremath{\scriptstyle #1}}%
  \setbox\xrat@above=\hbox{\ensuremath{\scriptstyle #2}}%
  \pgfmathsetlengthmacro{\xrat@len}{max(\wd\xrat@below,\wd\xrat@above)+.6em}%
  \mathrel{\tikz [>->,baseline=-.58ex,line width=0.43pt]
                 \draw (0,0) -- node[below=-2pt] {\box\xrat@below}
                                node[above=-2pt] {\box\xrat@above}
                       (\xrat@len,0) ;}}
\makeatother
\newcommand*{\move}[1]{\mathrel{\smash{\xrightarrowtail{\scriptscriptstyle#1}}}}%

\newcommand{\literal}[1]{\mathsf{#1}}
\newcommand{\E}{\mathcal{E}}
\newcommand{\N}{\mathbb{N}}
\newcommand{\Z}{\mathbb{Z}}
\newcommand*{\Min}{\operatorname{\mathrm{Min}}}

\newcommand{\w}{\text{wdh}}
\newcommand{\h}{\text{hgt}}
\newcommand*{\relSize}[1]{\lvert\mathord{#1}\rvert}
\newcommand{\winamin}{\wina^{\scriptscriptstyle\min}}
\newcommand{\wina}{\mathsf{Win}_a}
\newcommand{\pareto}{\mathsf{Pareto}_\mathcal{G}}
\newcommand{\iteration}{\mathsf{Iteration}_\mathcal{G}}
\newcommand*{\dom}{\operatorname{\mathrm{dom}}}

\definecolor{blue1}{RGB}{0,62,107}
\definecolor{orange1}{RGB}{255, 183, 2}
\definecolor{orange2}{RGB}{243,145,0}
\definecolor{blue2}{RGB}{0,171,217}
\definecolor{blue3}{RGB}{161,217,248}

\bibliographystyle{plainurl}

\title{Galois Energy Games}
\subtitle{To Solve All Kinds of Quantitative Reachability Problems}

\author{Caroline Lemke}{
Department of Computing Science,
Carl von Ossietzky Universität Oldenburg,
Germany}{caroline.lemke@uni-oldenburg.de}{https://orcid.org/0009-0008-7863-0818}{}

\author{Benjamin Bisping}{
Institute of Software Engineering and Theoretical Computer Science,
Technische Universität Berlin,
Germany \and
\url{https://bbisping.de}}{benjamin.bisping@tu-berlin.de}{https://orcid.org/0000-0002-0637-0171}{}

\authorrunning{C. Lemke and B. Bisping} 

\Copyright{Caroline Lemke and Benjamin Bisping} 

\ccsdesc[100]{Theory of computation} 

\keywords{Energy games, Galois connection, Reachability, Game theory, Decidability} 

\category{} 

\relatedversion{} 



\acknowledgements{We want to thank Gabriel Vogel for the WebGPU prototype implementation and Uli Fahrenberg and Lukas Panneke for their feedback.}

\nolinenumbers 


\begin{document}

\maketitle

\begin{abstract}
We provide a generic decision procedure for energy games with energy-bounded attacker and reachability objective, moving beyond vector-valued energies and vector-addition updates. 
All we demand is that energies form well-founded bounded join-semilattices, and that energy updates have an upward-closed domain and can be “undone” through a Galois-connected function. 
We instantiate these \emph{Galois energy games} to common energy games, declining energy games, multi-weighted reachability games, coverability on vector addition systems with states and shortest path problems, supported by an Isabelle-formalization and two implementations. 
For these instantiations, our simple algorithm is polynomial w.r.t.\ game graph size and exponential w.r.t.\ dimension.
\end{abstract}

\section{Introduction}

Most problems of how to reach a goal at minimal cost can be expressed as \emph{energy games}~\cite{generalized-energy-games, reachability-games-and-friends}.
In energy games, players try to achieve objectives, but fail if they run out of resources that they can gain or lose on the way.
But strangely, existing literature does not quite explain how to solve the following kind of multi-weighted energy game:

\begin{figure}[h]
    \centering    
    \begin{subfigure}[b]{0.5\textwidth}
    \begin{adjustbox}{center, trim={0mm 0mm 18mm 0mm}}
    \adjustbox{max height=5.5cm}{
        \centering
         \begin{tikzpicture}[>->, shorten <=1pt, shorten >=0.5pt, auto, node distance=2cm,
              posStyle/.style={draw, inner sep=1ex, minimum size=1cm, minimum width=2.25cm, anchor=center, draw, black, fill=gray!5},
              defender/.style={ellipse, inner sep=0ex}]
            
              \node[posStyle, initial, initial text={}]
                (Office){$\literal{Office}$};
              \node[posStyle]
                (CM) [above = 1.5cm of Office] {$\literal{CoffeeMaker}$};
              \node[posStyle, defender]
                (DH) [right = 2.5cm of CM] {$\literal{Department Head}$};
              \node[posStyle, defender, label={0:\checkmark}]
                (Fin) [right = 3cm of Office] {$\literal{Energized}$};
            
              \path
                (Office)
                  edge [out=285,in=255,looseness=8] node {$-1\,\literal{Shots}, +1\,\literal{E}$} (Office)
                  edge[bend left = 10] node {} (CM)
                  edge[bend right = 5] node {$-10\,\literal{E}$} (Fin)
                (CM)
                  edge [out=105,in=75,looseness=8] node {$\literal{Shots} \leftarrow \min\{\literal{Shots} + 1, \literal{Cups}\}$} (CM)
                  edge[bend left = 10] node {$-2\,\literal{T}$} (Office)
                  edge[bend left = 10] node {} (DH)
                (DH)
                  edge[bend right = 15] node {$-1\,\literal{T}$} (Office)
                  edge[bend left = 15, pos=.3] node {$-1\,\literal{Shots}, -1\,\literal{Cups}$} (Office)
              ;
            \end{tikzpicture}
    }
    \end{adjustbox}
         \caption{Game graph.}
         \label{fig:coffee-graph}
    \end{subfigure}
    \hfill
    \begin{subfigure}[b]{0.3\textwidth}
    \begin{adjustbox}{center}
    \adjustbox{max height=5.5cm}{
        \centering
        \begin{tikzpicture}[]
          \begin{scope}[scale=.26]
            \draw[help lines, opacity=.4] (0,0) grid (10.5,20.5);
            \draw[thick, red, opacity=0.7, text opacity=.8]
                (1,21) -- (1,20) -- (2,20) -- (2,10) -- (3,10) -- (3,6) -- (4,6) -- (4,4) -- (5,4) -- (5,2) -- (10, 2) -- (10,1) -- (11,1);
            \node[opacity=.4] at (12,0){$\mathsf{Cups}$};
            \node[opacity=.4] at (-0.5,21.3){$\mathsf{T}$};
            \foreach \x in {2,4,...,10}
            {
                \node[opacity=.4] at (\x,-0.7){\x};
            }
            \foreach \x in {2,4,...,20}
            {
                \node[opacity=.4] at (-0.8,\x){\x};
            }
          \end{scope}
        \end{tikzpicture}
        }
        \end{adjustbox}
        \caption{Minimal cost of time/cups.}
        \label{fig:coffee-pareto}
    \end{subfigure}
    \caption{Espresso energy game of \autoref{exm:running-example}.}
    \label{fig:coffee-example}
\end{figure}
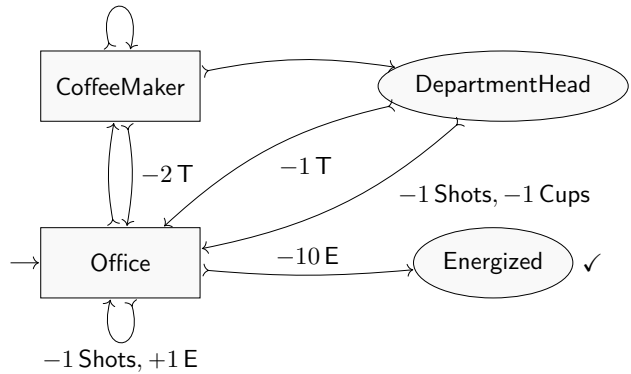
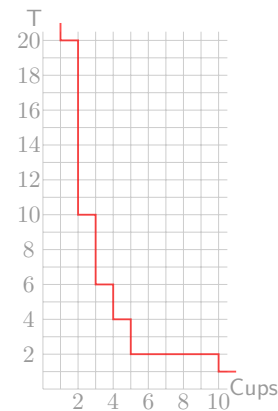

\begin{example}
    \label{exm:running-example}
    The energy game of \autoref{fig:coffee-graph} models the situation of a researcher, who hosts a coauthor at his office and wants to energize her with ten shots of espresso.
    For this, he leaves his $\literal{Office}$ for the $\literal{CoffeeMaker}$ and brews some $\literal{Shots}$ (bounded by the number of $\literal{Cups}$ he carries).
    Upon returning to the $\literal{Office}$, the shots can be transformed into energization $\literal{E}$.
    The return usually takes two units of time $\literal{T}$.
    But there is a quicker way, passing by the office of the $\literal{DepartmentHead}$.
    Unfortunately, the department head will either cost our host some time by chatting or help herself to one of the filled coffee cups. (Which of the two happens is not under his control.)
    Depending on the number of cups, he has to take several rounds until energization $\literal{E}$ reaches level $10$ and the game can be won by moving to $\literal{Energized}$.

    Assuming that we start with no energization and shots ($\literal{E} = \literal{Shots} = 0$), but with a budget of cups and time, how much of the two does the host need to reach his objective?
    There are several answers, summarized by the Pareto front of cups and time $\{(1,20), (2,10), (3,6), (4, 4), (5, 2), (10, 1)\}$, which is drawn in \autoref{fig:coffee-pareto}.

    The strategy of the host depends on the resources:
    If there is only one cup, then he should never pass by the department head, where it might be lost.
    So, he will cycle through $\literal{Office} \move{\quad} \literal{CoffeeMaker} \move{\literal{Shots} \leftarrow \min\{\literal{Shots} + 1, \literal{Cups}\}} \literal{CoffeeMaker} \move{-2\,\literal{T}} \literal{Office} \move{-1\,\literal{Shots}, +1\,\literal{E}} \literal{Office} \dots$ until the espressos have been served after at least $10 \cdot 2\,\literal{T}$ time units.
    If the host brings 10 or more cups, he can fill them all and then safely take the quicker route.
    There might be a delay at the head's office or it might take a second round, but $1\,\literal{T}$ will suffice.
    For the scenarios in-between, things become more involved \dots would it not be nice to have an algorithm to find the minimal budgets there?
\end{example}

\noindent
\emph{Energy games} are often studied in light of their closeness to mean-payoff games and (games on) vector addition systems with states~\cite{abdulla-parity-on-integer, reachability-games-and-friends, generalized-energy-games}.
They can model how different resources may interact with each other, for example, in a trade-off between one resource gained (energization) and another lost (espresso shots).
Energy updates are usually understood as vector addition.
But vector addition is too restricted to capture \autoref{exm:running-example}, where the $\literal{Cups}$ component bounds the $\literal{Shots}$ component!

A recent article on process equivalences introduces \emph{declining energy games}, which allow resources to constrain one-another~\cite{bens-algo}. But these games prohibit increasing updates!
Another recent article formally studies decision procedures and bounds for \emph{multi-weighted reachability games}~\cite{mult-reach-games}, which roughly are the intersection of declining and standard energy games, that is, even more restricted.

With this work,
we show that we can have all this cake and eat it:
Positive as well as negative updates (even multiplicative ones), components bounding each other, and a formalized solution.
All we need is that energy updates can be undone through \emph{Galois connections}, a weak form of inversion on monotonic functions.
Other than that, we demand little of the precise objects used as energies, which makes our results quite versatile.

\subparagraph*{Contributions}

\begin{itemize}
    \item \autoref{sec:energygames} introduces \emph{Galois energy games} as a generalization of existing energy game models with an \emph{inductive characterization of winning budgets}.
    \item \autoref{sec:decidability} provides a \emph{simple fixed-point algorithm to compute winning budgets} of Galois energy games and proves its correctness - formalized in Isabelle/HOL. 
    \item We examine how other energy game models and quantitative reachability problems can be \emph{understood as instances of our result} in \autoref{sec:instances} and discuss two prototype implementations.
    \item We give precise \emph{parametric upper bounds for the complexity} of the algorithm and its instantiations in \autoref{sec:complexity}: polynomial in game graph size, exponential in dimensionality.
    \item In \autoref{sec:winning-conditions}, we discuss how other scenarios and winning conditions could be addressed.
\end{itemize}

\section{Energy Games}\label{sec:energygames}

Energy games are zero-sum two-player games with perfect information, played on directed graphs labeled by energy update functions. Energies represent resources such as time or espresso shots, often modeled as vectors.
In our quest for genericity, we model energies as arbitrary partially ordered sets, where each subset has at least one minimal element and suprema of finite subsets exist, i.e.\ well-founded bounded join-semilattices.
The energy updates describe how each move across an edge affects the current energy, altering the resources as players navigate the game.

\subsection{Generic Energy Games}

We introduce energy games as generically as possible.
For this, we first fix a set of energies and assume $(\E, \leq)$ to be a well-founded bounded join-semilattice.
Then, we can define energy games played on directed graphs labeled by updates.

\begin{definition}[Energy game]\label{def:EnergyGames} 
    An \emph{energy game} $\mathcal{G} = (G,G_a, \move{ \ })$ over $(\E, \leq)$ consists of
    \begin{itemize}
        \item a set of positions $G$, partitioned into attacker positions $G_a \subseteq G$ and defender positions $G_d := G \setminus G_a$, and
        \item a partial function $\move{ \ } \ : (G \times G) \rightarrow (\E \rightarrow \E)$ mapping edges to partial energy functions.
    \end{itemize}
    For $g,g' \in G$ we write $g \move{u} g'$ instead of $\move{ \ } (g,g') = u$ and call $u$ an update in $\mathcal{G}$. The game with initial position $g_0 \in G$ and initial energy $e_0 \in \E$ is denoted as $\mathcal{G}[g_0, e_0]$.
\end{definition}

\begin{example}
    Formally, \autoref{def:EnergyGames} does not permit the two edges between $\mathsf{DepartmentHead}$ and $\mathsf{Office}$ in \autoref{exm:running-example}.
    However, this can be fixed by adding a node $\mathsf{Chat}$ such that $\mathsf{DepartmentHead} \move{-1 \, \mathsf{T}} \mathsf{Chat} \move{$\quad$} \mathsf{Office}$.
    In the example, the researcher getting coffee is the attacker, choosing at rectangular positions, while the environment is the defender at elliptical positions.
    The resources (cups, time, shots, energization) can be represented as four-dimensional natural vectors, making \autoref{exm:running-example} an energy game over $\N^4$ with the component-wise order.
\end{example}
In an energy game, the players move one token from one position to the next according to the edges, resulting in a play. At attacker positions, the attacker chooses (according to some attacker strategy) and vice versa.  

\begin{definition}[Play, strategy]
    Let $\mathcal{G} = (G,G_a, \move{ \ })$ be an energy game over $(\E, \leq)$. 
    A \emph{play} in $\mathcal{G}$ is a possibly infinite walk $\rho = g_0 g_1 ... \in G^* \cup G^\omega$ in the underlying directed graph.
    
    Let $G_p \in \lbrace G_a , G_d \rbrace $. A \emph{strategy} $s: G^* G_p \rightarrow G$ is a partial function mapping finite plays to a next position such that $s(g_0 ... g_n)$ is some successor of $g_n$ if $g_n \in G_p$ is not a deadlock. If $G_p=G_a$ (resp. $G_p = G_d$) we call $s$ an attacker (resp. defender) strategy.
    A play $\rho = g_0 g_1 ...$ is consistent with $s$ if $g_i \in G_p$ implies $g_{i+1} = s(g_0 ... g_i)$ for all $i + 1< |\rho |$.
\end{definition}
During a play, an initial energy is repeatedly updated according to the edges in the play. This is defined as the energy level, which keeps track of the resources. The attacker wins a play if and only if it ends in a deadlocked defender position before running out of energy, i.e.\ without the energy level becoming undefined.

\begin{definition}[Energy level, winning]\label{def:winning}
    Let $\mathcal{G} = (G,G_a, \move{ \ })$ be an energy game over $(\E, \leq)$.  
    The \emph{energy level} of a play $\rho= g_0 g_1 \ldots$ w.r.t.\ initial energy $e_0 \in \E$ is inductively defined as $EL(\rho, e_0, 0) := e_0$ and $EL(\rho, e_0, i+1) := u (EL(\rho, e_0, i))$ where $g_i \move{u} g_{i+1}$ and $i+1 < |\rho|$.
    
    If the energy level becomes undefined or the play is infinite, the defender wins the play. 
    If the last position $g_l$ of a finite play with the final energy level $EL(\rho, e_0, | \rho | -1 )$  defined is a deadlock, then the defender wins the play if $g_l \in G_a$ and the attacker wins if $g_l \in G_d$.
    
    For $g_0\in G$ and $e_0 \in \E$ an attacker winning strategy for $\mathcal{G}[g_0, e_0]$ is an attacker strategy ensuring the attacker to win all consistent plays. The attacker wins the game starting with $e_0$ from $g_0$ if and only if such an attacker winning strategy exists.
    The \emph{attacker winning budget} of $g\in G$ is defined as ${\wina(g) = \lbrace e \in \E \ | \ \exists s. \ s \text{ is an attacker winning strategy for } \mathcal{G}[g, e]\rbrace}$. 
\end{definition}
In other settings, infinite plays (where the energy level is always defined) are won by the attacker~\cite{generalized-energy-games, automata-bounds}. We will revisit this distinction in \autoref{sec:winning-conditions} as safety winning conditions.

\begin{remark}\label{rem:reachability}
The winning conditions in \autoref{def:winning} encode quantitative reachability problems: In reachability games~\cite{reachability-games-and-friends}, the attacker aims to reach a fixed target set in a (weighted) game graph. By treating defender deadlock positions as the target set, we formulated (quantitative) reachability winning conditions.

When starting with a multi-weighted reachability game~\cite{mult-reach-games}, i.e.\ a game $(G,G_a, \move{ \ })$ weighted by vectors of naturals with a target set $F \subseteq G$, this can easily be transformed to an energy game:
Remove outgoing edges of positions in $G_d  \cap F$ and add a loop to every deadlock in $G_d \setminus F$. 
Further, add a deadlocked defender position with an ingoing edge labeled by the zero-vector from each position in $G_a \cap F$. By this construction, a defender positions is a deadlock if and only if it corresponds to a position in $F$ in the original game.
In the reachability game one is interested in the minimal ensured cost\footnote{Note that Brihaye and Goeminne~\cite{mult-reach-games} refer to energies as costs. While this is fitting for settings where only vector addition is considered, we want to note that in energy games some resources might be necessary to have in order to win without ever being used up -- like at least one cup in \autoref{exm:running-example}.}
for a position $g$, i.e.\ a minimal upper bound the attacker can enforce while winning for the sum of edge weights of plays starting in $g$.
By switching from addition to subtraction this corresponds to asking for the minimal attacker winning budget of $g$ in the constructed energy game.
\end{remark}
\smallskip
Defender winning strategies and budgets can be defined symmetrically to \autoref{def:winning}. However, it is sufficient to focus on the attacker's objective, since energy games are determined w.r.t.\ a starting position and initial energy. This can be shown by constructing a parity game, which is positionally determined. For more details we refer to the appendix.

\begin{lemma}\label{lem:pos-determinacy}
    Let $\mathcal{G}$ be an energy game over $(\E, \leq)$.
    For all positions $g\in G$ and energies $e \in \E$ either the defender or the attacker wins $\mathcal{G}[g,e]$. Then, there exists an energy-positional winning strategy, i.e.\ a winning strategy that depends only on the current position and energy level. 
\end{lemma}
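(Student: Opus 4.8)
The plan is to reduce the lemma to positional determinacy of parity games. From an energy game $\mathcal{G} = (G, G_a, \move{\ })$ over $(\E,\leq)$ I build a parity game $\mathcal{P}$ whose vertices are all pairs $(g,e) \in G \times \E$ together with one extra sink vertex $\bot$. A vertex $(g,e)$ is owned by the attacker iff $g \in G_a$, and $\bot$ is given to the defender. Its edges are: $(g,e) \to (g', u(e))$ for every move $g \move{u} g'$ with $e \in \dom(u)$; $(g,e) \to \bot$ whenever $g \move{u} g'$ for some update $u$ with $e \notin \dom(u)$, and whenever $g$ is a deadlock; and the loop $\bot \to \bot$. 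Thus every vertex has a successor, so every play of $\mathcal{P}$ is infinite. Only two priorities are used: a vertex $(g,e)$ with $g$ a deadlocked \emph{defender} position gets priority $1$, every other vertex --- including $\bot$ and deadlocked attacker positions --- gets priority $0$; the attacker wins a play iff the maximal priority seen infinitely often is odd. Since priority-$1$ vertices have only their self-loop, this is equivalent to ``the play ever visits a priority-$1$ vertex''.

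Next I would set up the correspondence between $\mathcal{G}[g_0,e_0]$ and $\mathcal{P}$ from $(g_0,e_0)$. Reading the energy level off along a play, any play of $\mathcal{G}[g_0,e_0]$ in which the energy level stays defined lifts to a play of $\mathcal{P}$ from $(g_0,e_0)$, and conversely every play of $\mathcal{P}$ that never reaches $\bot$ projects to such a play of $\mathcal{G}$; a play reaching $\bot$ corresponds to a $\mathcal{G}$-play whose energy level becomes undefined. Under this matching the attacker wins the $\mathcal{P}$-play iff it wins the corresponding $\mathcal{G}$-play: an infinite $\mathcal{G}$-play with defined energy stays among priority-$0$ vertices; a finite $\mathcal{G}$-play ending in a deadlocked defender position visits a priority-$1$ vertex; and finite $\mathcal{G}$-plays ending in a deadlocked attacker position or running out of energy end up looping on priority $0$. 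Strategies transfer along this matching: a vertex-positional strategy in $\mathcal{P}$ induces an energy-positional strategy in $\mathcal{G}$ by reading the chosen successor off the current position and energy level (a winning $\mathcal{P}$-strategy never uses a $\bot$-edge at a vertex it owns), and any history-dependent strategy in $\mathcal{G}$ induces a strategy in $\mathcal{P}$ by projecting histories. Checking that winning is preserved in both directions --- where one uses that a winning attacker strategy in $\mathcal{G}$ can never let the energy run out and can never sustain an infinite play --- yields that the attacker wins $\mathcal{G}[g_0,e_0]$ iff the attacker wins $\mathcal{P}$ from $(g_0,e_0)$, and the same for the defender.

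Finally I invoke that parity games on arbitrary game graphs with finitely many priorities are positionally determined: every vertex of $\mathcal{P}$, in particular $(g_0,e_0)$, is won by exactly one of the two players, who moreover has a vertex-positional winning strategy from it. Transferring this back through the correspondence of the previous paragraph gives the lemma: exactly one of the attacker and the defender wins $\mathcal{G}[g_0,e_0]$, and the winner has an energy-positional winning strategy.

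I expect the main work to lie in the second paragraph: pinning down the edge relation and the two priorities of $\mathcal{P}$ so that every boundary case of \autoref{def:winning} --- the energy running out mid-play, deadlocks of either player, and plain infinite plays --- lands on the intended parity outcome, and then verifying that strategies genuinely transfer in both directions despite $\mathcal{G}$-strategies being partial and history-dependent. None of this is deep, but it is the part that needs care. The only genuinely external ingredient is positional determinacy of parity games over \emph{infinite} arenas, since $G \times \E$ need not be finite; this still holds because only two priorities occur, but it is worth invoking in that form rather than citing the finite-arena statement.
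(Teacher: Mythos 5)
Your overall strategy is the same as the paper's: turn the configuration graph of pairs $(g,e)$ into a parity game, invoke positional determinacy of parity games, and read the positional strategy back as an energy-positional strategy in $\mathcal{G}$. The paper uses a single priority $0$ and handles all terminal situations via dead-ends (with the convention that the owner of a dead-end loses), whereas you make the arena total with a sink $\bot$ and a second priority; both are legitimate, and your explicit remark that positional determinacy must be invoked for \emph{infinite} arenas is a point worth making that the paper glosses over.

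However, the edge relation you wrote down breaks the winner correspondence in exactly the one case that lets the attacker win. You route \emph{every} deadlocked position $(g,e)$ to $\bot$, and $\bot$ carries priority $0$ and a self-loop. A play reaching a deadlocked defender position therefore sees priority $1$ exactly once and priority $0$ forever after, so the maximal priority occurring infinitely often is $0$ and, under your winning condition, the \emph{defender} wins that play --- whereas in $\mathcal{G}$ a finite play ending in a deadlocked defender position is an attacker win. Since priority $1$ appears nowhere else and can never recur, every play of your parity game is won by the defender, so the reduction taken literally proves the absurd statement that the attacker never wins. Your own justification (``priority-$1$ vertices have only their self-loop'') presupposes a construction different from the one you defined: deadlocked defender positions must be absorbing --- a self-loop at $(g,e)$, or a separate priority-$1$ sink --- rather than feeding into the priority-$0$ sink $\bot$. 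With that one-line repair the correspondence of winners and the strategy transfer go through as you describe, and the argument matches the paper's.
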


\subsection{Inductive Characterization of Attacker Winning Budgets}
\label{subsec:inductive-attacker-win}

We want to decide energy games by calculating the attacker winning budgets. For this we study attacker winning budgets and observe that they are upward-closed in monotonic energy games, i.e.\ where all updates are monotonic and have an upward-closed domain.

\begin{definition}[Monotonic energy game]
    Let $\mathcal{G}$ be an energy game over $(\E, \leq)$. Then, $\mathcal{G}$ is a \emph{monotonic energy game} if for all updates $u$ in $\mathcal{G}$ and all $e,e'\in \E$ with $e\leq e'$ the following implication holds: If $u(e)$ is defined then so is $u(e')$ and $u(e) \leq u(e')$. 
\end{definition}

\begin{lemma}\label{lem:upwards-closure}
    Let $\mathcal{G}$ be a monotonic energy game over $(\E, \leq)$.
    Then, the attacker winning budgets are upward-closed, i.e.\ 
    $\wina (g) = \ \uparrow \wina (g) := \{ e \in \E \ | \ \exists e' \in \wina (g). \ e' \leq e \}$ for all positions $g\in G$.
\end{lemma}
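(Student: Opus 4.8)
The goal is: in a monotonic energy game, if $e \in \wina(g)$ and $e \leq e'$, then $e' \in \wina(g)$ (the reverse inclusion $\wina(g) \subseteq \ \uparrow\wina(g)$ is immediate since $e \leq e$). The natural approach is to show that an attacker winning strategy for $\mathcal{G}[g, e]$ is also winning for $\mathcal{G}[g, e']$. Let me think about what strategy to use and why it works.

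Actually, there's a subtlety: a strategy $s$ is a function $G^* G_a \to G$ on finite plays, so the *same* strategy $s$ that wins from $(g,e)$ can literally be used from $(g,e')$ — the strategy doesn't see the energy, only the play prefix. So the plan is: fix an attacker winning strategy $s$ for $\mathcal{G}[g,e]$, and show $s$ is also winning for $\mathcal{G}[g,e']$.

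First I would prove the key monotonicity invariant on energy levels: for any play $\rho = g_0 g_1 \ldots$ with $g_0 = g$, and any $i < |\rho|$, if $EL(\rho, e, i)$ is defined and $EL(\rho, e, i) \leq EL(\rho, e', i)$, then $EL(\rho, e', i)$ is defined and, whenever $i+1 < |\rho|$ with $g_i \move{u} g_{i+1}$, we have $EL(\rho, e, i+1)$ defined implies $EL(\rho, e', i+1)$ defined and $EL(\rho, e, i+1) \leq EL(\rho, e', i+1)$. This follows by induction on $i$: the base case $i=0$ is the hypothesis $e \leq e'$, and the inductive step is exactly the defining property of a monotonic energy game applied to the update $u$ with $EL(\rho,e,i) \leq EL(\rho,e',i)$. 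The consequence: \emph{along any play, every energy level reached from $e'$ is defined whenever the corresponding level from $e$ is defined} (and dominates it).

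Now I would argue that $s$ is winning from $(g, e')$. Take any play $\rho$ consistent with $s$ starting at $g$. I need to show the attacker wins $\rho$ with initial energy $e'$. Consider the "same" play from $(g,e)$: since $\rho$ is consistent with $s$, and $s$ is winning for $\mathcal{G}[g,e]$, the attacker wins $\rho$ with initial energy $e$, which by \refDef{def:winning} means $\rho$ is finite, all energy levels $EL(\rho, e, i)$ for $i < |\rho|$ are defined, and the last position $g_l$ is a deadlock with $g_l \in G_d$. By the invariant above, all $EL(\rho, e', i)$ for $i < |\rho|$ are then also defined, so the energy level does not become undefined with initial energy $e'$ either; the play is still finite with the same last deadlocked defender position $g_l$. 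Hence the attacker also wins $\rho$ with initial energy $e'$. Since $\rho$ was an arbitrary $s$-consistent play, $s$ is an attacker winning strategy for $\mathcal{G}[g,e']$, so $e' \in \wina(g)$.

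The main obstacle is essentially bookkeeping: being careful that "the same play $\rho$" is consistent with $s$ regardless of the initial energy (true, since consistency with a strategy is a purely graph-theoretic condition that ignores energy levels), and that the winning condition in \refDef{def:winning} only depends on (a) the play being finite, (b) all energy levels along it being defined, and (c) the nature of the last position — none of which can be spoiled by raising the initial energy, given the monotonicity invariant. There is no genuine difficulty with infinite plays or deadlocks at attacker positions, since those are losing for the attacker regardless of energy, so the winning strategy $s$ must already avoid producing them. I would state the energy-level invariant as an explicit sub-claim and dispatch it by induction, then conclude in two lines.
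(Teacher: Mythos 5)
Your proof is correct and follows exactly the route the paper takes: the paper's own justification is the one-line remark that ``the same strategy serves as a witness when starting with more energy,'' and your argument is simply a careful elaboration of that, with the energy-level monotonicity invariant made explicit and proved by induction. Nothing to add.
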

\autoref{lem:upwards-closure} follows directly from the definition of winning budgets and monotonic games since the same strategy serves as a witness when starting with more energy. 

In monotonic energy games, the attacker winning budgets can be characterized by finite Pareto fronts, i.e.\ by minimal attacker winning budgets $\winamin (g) : = \Min \wina (g)$. (For a set $M$ we distinguish between the set of all minimal elements, i.e.\ $\Min M := \lbrace m \in M \ | \ \forall m' \in M. \ m' \nleq m \rbrace$, and the minimum. We write $\min M$ to denote the minimum if it exists.) Since energies are well-founded, antichains and thereby the sets $\Min \wina (g)$ are finite. 

Attacker winning budgets have more structure, in particular, they can be characterized inductively.

\begin{lemma}\label{lem:inductive_wina}
    Let $\mathcal{G}$ be an energy game over $(\E, \leq)$. The attacker winning budgets can be characterized inductively as follows:
\begin{mathparpagebreakable}
    \inferrule{
      g \in G_a\\
      g \move{u} g'\\
      u(e) \in \wina(g')
    } {
      e \in \wina(g)
    }\and
    \inferrule{
      g \in G_d\\
      \forall g'. \  g \move{u} g' \longrightarrow u(e)\in \wina(g')
    } {
      e \in \wina(g)
    }
  \end{mathparpagebreakable}
\end{lemma}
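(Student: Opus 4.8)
The plan is to prove both inclusions of the inductive characterization. Write $W(g)$ for the set of energies inductively defined by the two rules; I want to show $W(g) = \wina(g)$ for all $g \in G$. One direction — that every inductively derivable energy is a genuine winning budget — is proven by induction on the derivation. For the base case, note that a deadlocked defender position $g \in G_d$ has no outgoing edges, so the universally quantified premise of the defender rule is vacuously true, giving $e \in W(g)$ for all $e$; and indeed the attacker wins $\mathcal{G}[g,e]$ trivially by \autoref{def:winning} since the play ends immediately in a defender deadlock. For the inductive step at an attacker position $g \in G_a$ with $g \move{u} g'$ and $u(e) \in W(g')$, the induction hypothesis gives an attacker winning strategy $s'$ for $\mathcal{G}[g', u(e)]$; I build a strategy for $\mathcal{G}[g,e]$ that first moves along the edge to $g'$ (this requires $u(e)$ to be defined, which is implicit in the premise $u(e) \in \wina(g')$ since $\wina(g') \subseteq \E$) and thereafter plays according to $s'$. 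The energy level after the first step is exactly $u(e)$, so every consistent continuation is won. The defender position case is symmetric: the attacker picks, for each successor $g'$, a winning strategy from $\mathcal{G}[g', u(e)]$ and plays whichever one becomes relevant after the defender's forced move.

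\textbf{For the converse inclusion} $\wina(g) \subseteq W(g)$, suppose $e \in \wina(g)$ with witnessing attacker winning strategy $s$. By \autoref{lem:pos-determinacy} I may assume $s$ is energy-positional, so it depends only on the current position and energy level; this is the key technical convenience, because it lets me recurse on game configurations $(g,e)$ rather than on finite histories. I then argue by well-founded induction — the natural ordering here is on the length of the longest $s$-consistent play from $(g,e)$, which must be finite: an infinite consistent play would be won by the defender, contradicting that $s$ is winning. (I should double-check this "longest play" quantity is well-defined; since every consistent play from a winning configuration is finite, and the game graph is the relevant object, König's lemma or just the fact that we only need *a* well-founded measure suffices — alternatively one can do induction directly on the well-founded structure of $s$-consistent play trees.) In the inductive step, if $g \in G_a$ then $s$ prescribes a successor $g' = s(g)$ with $g \move{u} g'$; the play continues from configuration $(g', u(e))$, the strategy $s$ restricted there is still winning, and the longest consistent play is strictly shorter, so the induction hypothesis yields $u(e) \in W(g')$, whence the attacker rule gives $e \in W(g)$. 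If $g \in G_d$, then for *every* successor $g'$ with $g \move{u} g'$, the defender could force the play there, so $s$ must win from $(g', u(e))$; by induction $u(e) \in W(g')$ for all such $g'$, and the defender rule gives $e \in W(g)$. The case where $g$ is a deadlock is absorbed: an attacker deadlock cannot be winning (defender wins), and a defender deadlock falls under the vacuous defender rule.

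\textbf{The main obstacle} I anticipate is setting up the well-founded induction cleanly in the $\wina \subseteq W$ direction. The subtlety is that a winning strategy need not bound play *lengths* uniformly (the defender's branching could make arbitrarily long plays possible even if each is finite), so "longest consistent play" may fail to be a natural number. The clean fix is to induct on the tree of $s$-consistent plays ordered by the sub-tree relation, or equivalently to observe that the restriction of $s$ to configurations reachable from $(g,e)$ induces a well-founded relation $(g',e') \prec (g,e)$ meaning "$(g',e')$ is a successor configuration under $s$-consistent play" — well-foundedness of $\prec$ is exactly the statement that all $s$-consistent plays terminate, which is forced by $s$ being winning. Everything else — the monotonicity/domain bookkeeping and the strategy-splicing — is routine and mirrors standard attractor-style arguments for reachability games.
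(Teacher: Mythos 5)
Your proposal is correct and follows essentially the same route as the paper: soundness by splicing attacker winning strategies at successors, and completeness by invoking \autoref{lem:pos-determinacy} to obtain an energy-positional strategy $s$ and then performing well-founded induction on the $s$-consistent successor relation over configurations $(g,e)$, whose well-foundedness follows from $s$ being winning (no infinite consistent play can be won by the attacker). Your explicit discussion of why "longest play length" is the wrong induction measure and why one must instead use the strategy-induced configuration order is precisely the point the paper's $\leq_s$ construction addresses.
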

\begin{proof}
An attacker winning strategy can be constructed from attacker winning strategies at successors w.r.t.\ the accordingly updated energies. This implies soundness. 

To argue completeness we utilize \autoref{lem:pos-determinacy}. 
Assuming $e \in \wina (g)$, there exists an energy-positional attacker winning strategy $s$ for $\mathcal{G}[g,e]$. 
We then fix an order induced by $s$ by setting $(g'',e'') \leq_s (g' ,e')$ if $(g'', e'') = (g',e')$ or there is an edge in the configuration graph from $(g',e')$ to $(g'', e'')$ consistent with $s$, i.e.\ if $g' \move{u} g''$, $e'' = u(e')$ and $g' \in G_a \longrightarrow s(g',e') = g''$.
Then, a derivation of $e \in \wina (g)$ can be found by well-founded induction.    
\end{proof}

\subsection{Galois Energy Games}

In order to convert the inductive characterization of \autoref{lem:inductive_wina} into a fixed-point algorithm, it seems that we need to \emph{invert} energy updates in the game.  However, it turns out that what is sufficient is a weaker form of invertibility given by \emph{Galois connections}. 

\begin{definition}[Galois energy game]\label{def:galois-games}
    A \emph{Galois energy game} is a monotonic energy game $\mathcal{G}$ over $(\E, \leq)$ such that $u^{\circlearrowleft}(e') := \min \lbrace e \in \E \ | \ e' \leq u(e) \rbrace $ exists and is computable for all updates $u$ in $\mathcal{G}$ and $e'\in \E$.
\end{definition}
Galois connections can be characterized by one of the following four equivalent properties. We refer to Erné et al.~\cite{galois} for more detail.

\begin{lemma}\label{lem:galoisproperties}
    Let $\mathcal{P}=(P, \leq_P)$ and $\mathcal{Q}=(Q, \leq_Q)$ be partially ordered sets with functions $f: P \rightarrow Q$ and $g: Q \rightarrow P$. 
    Then the following are equivalent:
\begin{enumerate}
    \item $f(p) \leq_Q q \longleftrightarrow p \leq_P g(q)$ holds for all $p \in P$ and $q \in Q$.
    \item $f$ and $g$ are monotonic, $g\circ f$ is increasing, and $f \circ g$ is decreasing. 
    \item $f$ is monotonic and $g(q)=\max_P \lbrace p \in P \ | \  f(p) \leq_Q q \rbrace$ for each $q\in Q$.\label{lem:galoisMax}
    \item $g$ is monotonic and $f(p) = \min_Q \lbrace q  \in Q \ | \ p \leq_P g(q) \rbrace$ for each $p\in P$.
\end{enumerate}

\end{lemma}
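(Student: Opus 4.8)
I would prove the four statements equivalent through a short cycle of implications — $(1) \Rightarrow (2) \Rightarrow (3) \Rightarrow (1)$ — together with the equivalence $(1) \Leftrightarrow (4)$, which is dual to $(1) \Leftrightarrow (3)$. Everything here is elementary order theory; the only part requiring discipline is the bookkeeping, since at each step one must be explicit about which direction of the biconditional in (1) is being used and which of the three clauses of (2) is invoked.

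For $(1) \Rightarrow (2)$: to see that $f$ is monotonic, I instantiate the right-to-left direction of (1) with $q := f(p')$ to obtain $p' \leq_P g(f(p'))$; chaining with $p \leq_P p'$ and applying the left-to-right direction yields $f(p) \leq_Q f(p')$, and monotonicity of $g$ is symmetric. That $g \circ f$ is increasing is the right-to-left direction of (1) applied to $f(p) \leq_Q f(p)$, and that $f \circ g$ is decreasing is the left-to-right direction applied to $g(q) \leq_P g(q)$. For $(2) \Rightarrow (3)$: writing $S_q := \lbrace p \in P \mid f(p) \leq_Q q \rbrace$, the element $g(q)$ lies in $S_q$ because $f(g(q)) \leq_Q q$ by the decreasing property of $f \circ g$, and it is an upper bound of $S_q$ since for $p \in S_q$ monotonicity of $g$ and the increasing property of $g \circ f$ give $p \leq_P g(f(p)) \leq_P g(q)$; hence $g(q) = \max_P S_q$. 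For $(3) \Rightarrow (1)$: if $f(p) \leq_Q q$ then $p \in S_q$, so $p \leq_P g(q)$; conversely, if $p \leq_P g(q)$, then $g(q) \in S_q$ gives $f(g(q)) \leq_Q q$, and monotonicity of $f$ yields $f(p) \leq_Q f(g(q)) \leq_Q q$.

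Finally, I would obtain $(1) \Leftrightarrow (4)$ by reading the $(1) \Leftrightarrow (3)$ argument in the dual order: replace $\max_P$ of the down-set $S_q$ under $f$ by $\min_Q$ of the up-set $\lbrace q \in Q \mid p \leq_P g(q) \rbrace$ under $g$, swap the roles of $f$ and $g$, and interchange the two directions of the biconditional. I do not expect any step to pose a real obstacle; the one subtlety worth care is that $P$ and $Q$ are only assumed to be partially ordered sets, not complete lattices, so in establishing (3) and (4) one may not invoke suprema — the extremal element has to be exhibited concretely (as $g(q)$, resp.\ $f(p)$) and then checked to be both a member of the relevant set and extremal within it.
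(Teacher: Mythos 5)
Your proof is correct. Note that the paper itself does not prove this lemma at all --- it defers to Ern\'e et al.\ for the equivalences, and the accompanying Isabelle formalization covers only the implication $(1) \Rightarrow (2)$ --- so your write-up supplies a complete argument where the paper gives a citation. The structure you chose (the cycle $(1) \Rightarrow (2) \Rightarrow (3) \Rightarrow (1)$ together with $(1) \Leftrightarrow (4)$ by order duality) is the standard one, and every individual step checks out: the reflexivity instantiations giving $p \leq_P g(f(p))$ and $f(g(q)) \leq_Q q$, the membership-plus-upper-bound argument showing $g(q) = \max_P S_q$, and the recovery of both directions of the adjunction from $(3)$. You are also right to flag that, absent completeness of $P$ and $Q$, the extremum in $(3)$ and $(4)$ must be exhibited as an attained element rather than obtained as a supremum --- this is exactly the point that makes clause~$3$ usable in \autoref{def:galois-games}, where $u^{\circlearrowleft}(e')$ is required to \emph{exist}. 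One cosmetic slip: your labels ``left-to-right'' and ``right-to-left'' for the two directions of the biconditional in $(1)$ are used in the opposite sense to how the biconditional is written (you derive the right-hand side $p' \leq_P g(f(p'))$ and call that the right-to-left direction), but since you apply them consistently the mathematics is unaffected.
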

If any of the (and thereby all) properties hold, then $f$ and $g$ form a Galois connection between $P$ and $Q$. 
By \autoref{lem:galoisproperties} an energy game $\mathcal{G}$ over $(\E, \leq)$ is a Galois energy game if and only if each update $u$ in $\mathcal{G}$ has an upward-closed domain and there exists a computable function $u^{\circlearrowleft}: \E \rightarrow \dom(u)$ such that $u^{\circlearrowleft}$ and $u$ form a Galois connection between $\E$ and $\dom (u)$. Then, the minima used to define $u^\circlearrowleft$ in \autoref{def:galois-games} always exist. 

\begin{example}
    The energy game from \autoref{exm:running-example} is a Galois energy game. In particular, 
    $\left( \literal{Shots} \leftarrow \min\{\literal{Shots} + 1, \literal{Cups}\} \right)^{\circlearrowleft} (e_{\literal{C}},e_{\literal{T}},e_{\literal{S}},e_{\literal{E}}) = (\max\{e_{\literal{C}}, e_{\literal{S}}\}, e_{\literal{T}} , \max\{e_{\literal{S}} -1 , 0 \}, e_{\literal{E}})$.
\end{example}

\section{Deciding Galois Energy Games}
\label{sec:decidability}

For a fixed position $g$, the \emph{known initial credit problem} asks whether energy $e$ is sufficient for the attacker to always win when starting from $g$, i.e.\ whether $e \in \wina (g)$.
The \textit{unknown initial credit problem} asks, whether there exists an energy sufficient for the attacker to win from $g$, i.e.\ whether $\wina (g) \neq \varnothing$.
Subsuming both these problems, we provide an algorithm calculating the \emph{minimal attacker winning budgets of each position}.

The core idea of the algorithm is that of a shortest path algorithm:
We start at deadlocks and move backwards through the graph using the inductive characterization of attacker winning budgets (\autoref{lem:inductive_wina}).
To calculate energies backwards, we use Galois connections, allowing us to prove:

\begin{theorem}[Decidability of Galois energy games]\label{thm:decidable}
    Let $\mathcal{G}$ be a Galois energy game over $(\E, \leq)$ with a finite set of positions. 
    Then, the (un)known initial credit problem is decidable.
\end{theorem}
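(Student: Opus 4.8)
The plan is to turn the ``shortest path'' idea into a monotone fixed-point iteration on \emph{finitely represented} upward-closed sets and to show that it halts. Throughout I represent an upward-closed $U\subseteq\E$ by its Pareto front $\Min U$, which is finite because $\E$ is well-founded (antichains are finite); by \refLem{lem:upwards-closure} every $\wina(g)$ is upward-closed, hence faithfully described by $\winamin(g)$. Let $\Phi$ be the operator on families $(U_g)_{g\in G}\in\mathcal{P}(\E)^G$ obtained by applying the two inference rules of \refLem{lem:inductive_wina} once and retaining $U_g$: for an attacker position $\Phi(\overline U)_g = U_g\cup\bigcup_{g\move{u}g'}\{e : u(e)\in U_{g'}\}$, and for a defender position $\Phi(\overline U)_g = U_g\cup\bigcap_{g\move{u}g'}\{e : u(e)\in U_{g'}\}$, the empty intersection at a deadlocked defender position being $\E$. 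Since $G$ and every successor set are finite, $\Phi$ is monotone and Scott-continuous on $\mathcal{P}(\E)^G$, so by \refLem{lem:inductive_wina} its least fixed point is exactly $(\wina(g))_g$ and $\wina(g)=\bigcup_{i\in\N}\Phi^i(\overline\varnothing)_g$, where $\overline\varnothing$ is the family empty at every position.

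Next I would prove that this Kleene iteration stabilises after finitely many steps --- this is the heart of the argument, because over a general well-founded poset an increasing chain of upward-closed sets need \emph{not} stabilise. The trick is to route the argument through the limit. For each $g$ the set $\winamin(g)$ is finite, and each of its finitely many elements, lying in $\wina(g)=\bigcup_i\Phi^i(\overline\varnothing)_g$, already appears in some $\Phi^i(\overline\varnothing)_g$; as $G$ is finite there is a single $N$ with $\winamin(g)\subseteq\Phi^N(\overline\varnothing)_g$ for all $g$, whence $\Phi^N(\overline\varnothing)_g=\uparrow\!\winamin(g)=\wina(g)$ and $\Phi^{N+1}(\overline\varnothing)=\Phi^N(\overline\varnothing)$. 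Conversely, the first index at which two consecutive iterates agree yields a fixed point which, being $\ge\overline\varnothing$ and (by monotonicity and induction) $\le$ the least fixed point, must equal $(\wina(g))_g$. So the algorithm is: iterate $\Phi$ from $\overline\varnothing$, comparing consecutive Pareto fronts, and stop at the first repetition; it necessarily halts, returning $\winamin(g)$ for every $g$.

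It remains to check that one iteration step is effective on Pareto-front representations. Every set occurring above is upward-closed with a finite front, and the front is preserved by the operations used: $\{e:u(e)\in{\uparrow}\!\{x\}\}=\{e : x\le u(e)\}$ equals the principal filter ${\uparrow}\!\{u^{\circlearrowleft}(x)\}$ by \refDef{def:galois-games} (where $u^{\circlearrowleft}(x)$ is required to be computable); a finite union of upward-closed sets is represented by $\Min$ of the union of their fronts; and a finite intersection $\bigcap_j{\uparrow}\!A_j$ equals ${\uparrow}\!\{\bigvee_j a_j : (a_j)_j\in\prod_j A_j\}$, a finite set of finite joins, available and computable since $\E$ is a bounded join-semilattice with effective order and join (if some $A_j=\varnothing$ the intersection is $\varnothing$; at a deadlocked defender position it is $\E$, with front $\{\bot\}$). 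Equality of finite fronts is decidable via $\le$. Hence one step, and therefore the whole iteration, is computable, which proves \refThm{thm:decidable}: the known initial credit problem is decided by $e\in\wina(g)\iff\exists e'\in\winamin(g).\,e'\le e$, and the unknown one by $\wina(g)\neq\varnothing\iff\winamin(g)\neq\varnothing$.

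The main obstacle is the termination claim of the second paragraph: unlike the classical $\N^k$ setting there is no \emph{a priori} bound on the number of iterations, and the naive ``increasing chains of upward-closed sets stabilise'' reasoning fails over arbitrary well-founded join-semilattices --- one genuinely has to pass through the limit $\wina$, use that its Pareto fronts are finite, and combine this with continuity of $\Phi$. A secondary point needing care is that the backward step is effective only because the Galois assumption (\refDef{def:galois-games}) supplies the computable ``undo'' $u^{\circlearrowleft}$, and the defender step relies on the join-semilattice structure being computable; one should also double-check Scott-continuity of $\Phi$ at defender positions, which holds precisely because each position has finitely many successors.
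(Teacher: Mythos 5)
Your proposal is correct and follows the same overall strategy as the paper -- a Kleene iteration from the empty assignment, identification of the least fixed point with the winning budgets via \autoref{lem:inductive_wina}, effectiveness of the backward step through $u^{\circlearrowleft}$, and termination because each of the finitely many elements of each finite $\winamin(g)$ already appears at some finite stage. The one genuine structural difference is where the fixed-point machinery lives. You run Kleene's theorem on the powerset lattice $(2^{\E})^G$ ordered by componentwise inclusion, where directed-completeness is free and Scott-continuity of your operator $\Phi$ is almost immediate from finite branching; only afterwards do you descend to Pareto-front representations to argue computability (using that the iterates are upward-closed, that preimages under $u$ of principal filters are principal filters $\uparrow u^{\circlearrowleft}(x)$, and that finite intersections of finitely generated upward-closed sets are generated by componentwise joins). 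The paper instead applies Kleene directly to the space $\pareto$ of antichain-valued maps under the order $F \preceq F' \leftrightarrow \forall g.\, F(g) \subseteq \uparrow F'(g)$, which forces it to verify that this is a directed-complete partial order and that $\iteration$ is Scott-continuous there (the most delicate step of its proof), and to introduce auxiliary inductively defined sets $S(g)$ to relate the minimum-taking iteration back to $\wina$. Your route buys a simpler continuity argument and dispenses with $S(g)$, at the cost of a separate representation/effectiveness layer; the paper's route stays closer to the literal \autoref{alg:game-algorithm}, whose correctness it certifies directly. Two small points to tighten: you should state explicitly (where the termination argument needs it, not only in the final paragraph) that every iterate $\Phi^i(\overline\varnothing)_g$ is upward-closed -- this follows from monotonicity of updates and upward-closedness of their domains -- and the front of $\E$ at a deadlocked defender position should be written $\Min \E$ rather than a singleton unless you invoke that a bounded join-semilattice has a least element.
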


\subsection{The Algorithm}
To prove \autoref{thm:decidable}, we first discuss \autoref{alg:game-algorithm}. 
This is a simplified version of an algorithm proposed by Bisping~\cite{bens-algo}.

\LinesNumbered
\DontPrintSemicolon
\SetKwProg{For}{for}{\string:}{}
\SetKwProg{Fn}{def}{\string:}{}
\SetKwProg{While}{while}{\string:}{}
\SetKwIF{If}{ElseIf}{Else}{if}{\string:}{elif}{else\string:}{}%

\providecommand*{\code}[1]{\texttt{#1}}

\global\long\def\defEq{\mathrel{\coloneqq}}%
\global\long\def\codeStyle#1{\mathrm{#1}}%
\global\long\def\varname#1{\mathsf{#1}}%
\global\long\def\ccsRes#1{\left(\boldsymbol{\nu}#1\right)}%
\global\long\def\ccsStop{\mathbf{0}}%
\global\long\def\ccsPrefix{\ldotp\!}%
\global\long\def\ccsChoice{+}%
\global\long\def\ccs{\mathsf{CCS}}%
\global\long\def\ccsIdentifier#1{\mathsf{#1}}%
\global\long\def\ccsOutm#1{\overline{#1}}%
\global\long\def\ccsInm#1{#1}%

\global\long\def\rel#1{\mathcal{#1}}%
\global\long\def\bigo{\mathcal{O}}%
\newcommand{\powerSet}[1]{\mathbf{2}^{#1}}
\newcommand{\bellNumber}[1]{\mathbf{B}(#1)}
\newcommand{\compose}{\mathbin{\circ}}
\newcommand{\nats}{\mathbb{N}}
\newcommand{\ints}{\mathbb{Z}}
\newcommand{\domain}{\operatorname{\mathrm{dom}}}
\newcommand*{\vectorComponents}[2][n]{({#2}_1,\ldots,{#2}_{#1})}

\newcommand*{\gameMoveX}[1]{\mathrel{\smash{\xrightarrowtail{\scriptscriptstyle#1}}}}%
\newcommand*{\gameMove}{\gameMoveX{\hspace*{0.5em}}}%
\newcommand*{\game}{\mathcal{G}}%
\newcommand*{\gameSpectroscopy}{\mathcal{G}_\triangle}%
\newcommand*{\gameSpectroscopyClever}{\mathcal{G}_\blacktriangle}%
\newcommand*{\attackerPos}[2][]{{{\color{gray}(}#2{\color{gray})}}_\mathtt{a}^{\color{gray}\smash{\scriptscriptstyle#1}}}
\newcommand*{\defenderPos}[2][]{{{\color{gray}(}#2{\color{gray})}}_\mathtt{d}^{\color{gray}\smash{\scriptscriptstyle#1}}}
\newcommand*{\partition}[1]{\mathscr{#1}}
\newcommand*{\conjClosure}[1]{\lceil #1 \rceil^\land}
\newcommand*{\attackerSubscript}{{\operatorname{a}}}
\newcommand*{\defenderSubscript}{{\operatorname{d}}}
\newcommand*{\energies}{\mathbf{En}}
\newcommand*{\energyUpdates}{\mathbf{Up}}%
\newcommand*{\energyUpdate}{\mathsf{upd}}%
\newcommand*{\energyUpdateInv}{\mathsf{upd}^{-1}}%
\newcommand*{\updMin}[1]{\mathtt{min}_{\{\!#1\!\}}}
\newcommand*{\energyLevel}{\mathsf{EL}}%
\newcommand*{\attackerWin}{\mathsf{Win}_\attackerSubscript}
\newcommand*{\attackerWinMin}{\attackerWin^{\scriptscriptstyle\min}}
\newcommand*{\defenderWinMax}{\mathsf{Win}_\defenderSubscript^{\scriptscriptstyle\max}}

\begin{algorithm}[h]
  \Fn{$\varname{compute\_winning\_budgets}(\mathcal{G}=(G,G_a,\move{ \ }))$}{

    $\varname{win}\, := [g\mapsto \{\} \ | \ g\in G]\label{code:start-at-0}$


    \SetKwRepeat{Do}{do}{while}
    \Do{$\varname{win} \neq \varname{old\_win}$\label{code:end-while}}{\label{code:begin-while}

    $\varname{old\_win} := \varname{win}$
    
    \For{$\varname{g} \in G \ $\label{code:begin-iteration}}
    {
      \eIf{$\varname{g} \in G_a$}{

        $\varname{win}[\varname{g}] := 
          \Min \{ u^\circlearrowleft(e') \mid {\varname{g} \move{u} g'} \land \mathit{e'} \in \varname{old\_win}[g']\}$
        \label{code:attacker-pos-update}

      } {
        $\varname{win}[\varname{g}]  :=
           \Min \lbrace \sup \lbrace u^\circlearrowleft(e_{g'}) |  \varname{g} \move{u} g' \rbrace \ | \ \forall g'. \ \varname{g} \move{u} g' \longrightarrow e_{g'} \in \varname{old\_win}[g'] \rbrace$\label{code:defender-pos-update}
      }
    }
    }

    \KwRet{$\varname{old\_win}$}

  }
   \caption{\label{alg:game-algorithm}Computing minimal attacker winning budgets of Galois energy game $\mathcal{G}$.}
\end{algorithm}

Starting by assigning the empty set to each position in \autoref{code:start-at-0}, we then apply lines~\ref{code:begin-while}~to~\ref{code:end-while} adding sufficient energies to winning budgets and taking the minimal elements (which exist by well-foundedness) until we reach a fixed point to return. 
For a play to be won by the attacker, it has to end in a defender deadlock position. At such a position $g_d$ all energies are in the attacker winning budget and $\winamin (g_d) = \Min \E$. 
If a defender position has successors, then an energy is in the attacker winning budgets if it is sufficient to win moving to any successor the defender might choose. In Galois energy games, that is $e \in \wina (g_d)$ for $g_d\in G_d$ if $u^{\circlearrowleft} (e_{g'}) \leq e$ for some $e_{g'} \in \wina (g')$ for each $g'$ with $g_d \move{u} g'$.
This is ensured by \autoref{code:defender-pos-update}. (Note that the supremum always exists because the energies form a bounded join-semilattice and the sets in question are finite.)
Similarly, sufficient energies for attacker positions can be calculated from energies in the winning budget of successors, as done in \autoref{code:attacker-pos-update}. In particular, $g_a \in G_a$ and $e' \in \wina (g')$ for some $g'$ with $g_a \move{u} g'$ implies $u^{\circlearrowleft} (e') \in \wina (g_a)$.

\begin{table}[h]
    \centering
    \begin{tabular}{c|lllll}
        Iteration & 0 & 1 & 2 & 3 & $\ldots$ 
        \\
        \hline
        $\mathsf{Energized}$ & $\varnothing$ & $\lbrace (0,0,0,0) \rbrace$ & $\lbrace (0,0,0,0) \rbrace$  & $\lbrace (0,0,0,0) \rbrace$ & $\ldots$ 
        \\
        $\literal{Office}$ & $\varnothing$ & $\varnothing$ & $\lbrace (0,0,0,10) \rbrace$ & 
        $\lbrace (0,0,0,10), (0,0,1,9) \rbrace$&  $\ldots$ 
        \\
        $\mathsf{CoffeeMaker}$ & $\varnothing$ & $\varnothing$ & $\varnothing$& $\lbrace (0,2,0,10) \rbrace$ & $\ldots$ 
    \end{tabular}
    \caption{Application of \autoref{alg:game-algorithm} to \autoref{exm:running-example} \label{table:algorunningexm}}
\end{table}

\autoref{table:algorunningexm} illustrates how the information, which positions are winnable, travels through the graph when applying \autoref{alg:game-algorithm} to \autoref{exm:running-example}. The vectors represent cups, time, shots and energization in that order.

\subsection{Correctness}

Since we are considering a fixed point algorithm, we use Kleene's fixed point theorem~\cite{kleenes-fp-1981, kleene-fp-AFP} to prove \autoref{thm:decidable}. The following version relies on the notion of directed sets, i.e.\ a subset of a partially ordered set where all pairs of elements have an upper bound in the subset. In a directed-complete partial order all directed sets have a (directed) supremum. A Scott-continuous function preserves directed suprema.

\begin{lemma}[Kleene's fixed point theorem]\label{lem:kleene}
    Let $(P, \leq_P)$ be a directed-complete partial order with least element $p \in P$ and a Scott-continuous function $f: P \rightarrow P$.
    Then, the set of fixed points of $f$ in $P$ forms a complete lattice with least fixed point $\sup_P \lbrace f^i (p) \ | \ i \in \N \rbrace$.
\end{lemma}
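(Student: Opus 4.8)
The plan is to reproduce the classical Kleene iteration argument, isolating the points where directed-completeness, the least element, and Scott-continuity are actually needed. First I would record the standard fact that every Scott-continuous function is monotone: for $a \leq_P b$ the pair $\{a,b\}$ is directed with supremum $b$, so $f(a) \leq_P f(b)$. Since $p$ is the least element of $P$ we have $p \leq_P f(p)$, and monotonicity then gives $f^{i}(p) \leq_P f^{i+1}(p)$ for all $i$ by induction. Hence the set of iterates $C := \{\, f^i(p) \mid i \in \N \,\}$ is an ascending chain, in particular a directed set, so by directed-completeness its supremum $x := \sup_P C$ exists.

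I would then verify that $x$ is the least fixed point. That $x$ is a fixed point is where Scott-continuity enters: $f(x) = f(\sup_P C) = \sup_P\{\, f^{i+1}(p) \mid i \in \N \,\}$, and adjoining the least element $p = f^0(p)$ to this set does not change the supremum, so $f(x) = \sup_P C = x$. For minimality, let $y$ be any fixed point; then $p \leq_P y$, hence $f^i(p) \leq_P f^i(y) = y$ for all $i$ by monotonicity, so $y$ is an upper bound of $C$ and $x \leq_P y$.

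For the complete-lattice claim I would reduce to showing that every $S \subseteq \mathrm{Fix}(f)$ has a supremum inside $\mathrm{Fix}(f)$; the infimum of $S$ is then the supremum (taken in $\mathrm{Fix}(f)$) of the set of fixed points below all of $S$, which is nonempty as it contains $x$, and $\varnothing$ is covered by $x$ and the greatest fixed point. To build $\bigvee S$ I would pass to a post-fixed point above $S$, namely $b := \sup_P S$: each $s \in S$ satisfies $s = f(s) \leq_P f(b)$, so $b \leq_P f(b)$. The principal up-set $\{\, z \in P \mid b \leq_P z \,\}$ is again directed-complete with least element $b$, and $f$ restricts to a Scott-continuous self-map of it (for $z \geq_P b$ one has $f(z) \geq_P f(b) \geq_P b$); applying the first two steps to this restriction produces its least fixed point $\sup_P\{\, f^i(b) \mid i \in \N \,\}$, which is exactly the least fixed point of $f$ lying above $S$, i.e.\ $\bigvee_{\mathrm{Fix}(f)} S$.

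The genuinely load-bearing step is the fixed-point check in the second paragraph: monotonicity alone only yields $\sup_P C \leq_P f(\sup_P C)$, a mere post-fixed point, so the care goes into using Scott-continuity — not just monotonicity — to upgrade this to equality. The only delicate point in the complete-lattice part is the existence of the ambient join $\sup_P S$; over a bare directed-complete order with least element this can genuinely fail (for instance $f = \mathrm{id}$ on the three-element order $\{\bot, a, b\}$ with $a,b$ incomparable), so that part tacitly uses that $P$ carries the relevant joins — which it does in all our instantiations, where $P$ is a finite product of the complete lattices of upward-closed subsets of $\E$; otherwise one simply invokes the existing Isabelle formalizations~\cite{kleenes-fp-1981, kleene-fp-AFP}.
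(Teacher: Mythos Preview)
The paper does not give its own proof of this lemma at all: it is stated as a classical result and, per \autoref{table:formalization}, the formalization simply imports the existing AFP entry by Yamada and Dubut~\cite{kleene-fp-AFP}. In the subsequent proof of \autoref{thm:decidable} only the least-fixed-point clause is ever invoked (to identify $\sup_\preceq\{\iteration^i(\mathbf{0})\mid i\in\N\}$ as the least fixed point); the ``complete lattice of fixed points'' clause is never used.

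Your argument for the least-fixed-point part is the standard Kleene iteration proof and is correct. You also put your finger on something the paper glosses over: as stated---for an arbitrary directed-complete partial order with least element---the complete-lattice clause is \emph{false}, and your three-element counterexample with $f=\mathrm{id}$ witnesses this. Your workaround (assume the ambient joins $\sup_P S$ exist, then relativize the Kleene iteration to the up-set above $\sup_P S$) is the right idea and recovers the conclusion under that extra hypothesis; you are also right that the concrete $(\pareto,\preceq)$ used in the paper does carry these joins. So: your proof is more than the paper supplies, and your caveat about the lattice clause is a genuine correction to the lemma's hypotheses rather than a gap in your reasoning.
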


\begin{proof}[Proof of \autoref{thm:decidable}]
    To utilize Kleene's fixed point theorem, we define a partial order on mappings from positions to possible Pareto fronts, i.e.\ antichains in $\E$. We call the set of such mappings $\pareto \subseteq (G \to 2^\E)$ and define 
    \begin{equation*}
            F \preceq F' \longleftrightarrow \forall g \in G. \  F(g) \subseteq \ \uparrow F'(g) \quad \text{ and } \quad \left(\textstyle{\sup_\preceq} P \right) (g) := \Min \smash{\bigcup_{F \in P}} \uparrow F(g)
    \end{equation*}
    for $F, F' \in \pareto$ and $P\subseteq \pareto$.
    These definitions yield that $(\pareto, \preceq)$ is a directed-complete partial order with minimal element $\mathbf{0}:= \lambda g. \ \varnothing$. 

    We then consider the function $\iteration : \pareto \rightarrow \pareto$ corresponding to one iteration of the while-loop in \autoref{alg:game-algorithm} when applied to $\mathcal{G}$ and show that it is Scott-continuous w.r.t.\ the $\preceq$-order, i.e.\ $\iteration (\sup_\preceq P) = \sup_\preceq \lbrace \iteration (F) \ | \ F \in P \rbrace$ for all directed sets $P \subseteq \pareto$. 
    
    Let $P \subseteq \pareto$ be such a directed set.
    The definition of $\iteration$ and monotonicity of all $u^{\circlearrowleft}$ together imply monotonicity of $\iteration$ and thereby $ \iteration (F)\preceq \iteration (\sup_\preceq P)$ for all $F \in P$. Therefore, it suffices to show $\iteration (\sup_\preceq P) \preceq \sup_\preceq \lbrace \iteration (F) \ | \ F \in P \rbrace$.
    Let $g\in G$ and $e \in \iteration (\sup_\preceq P)(g)$. We now show $e \in \ \uparrow \left( \sup_\preceq \lbrace \iteration (F) \ | \ F \in P \rbrace \right) (g)$. We focus on the more intricate case and assume $g \in G_d$. By definition of $\iteration$, there exist $e_{g'} \in \left( \sup_\preceq P\right) (g')$ for each $g'\in G$ with $g \move{u} g'$ such that $e = \sup \lbrace u^{\circlearrowleft} (e_{g'}) \ | \ g \move{u} g' \rbrace$. By definition of $\sup_\preceq$ there exist $F_{g'}\in P$ with $e_{g'} \in \ \uparrow F_{g'}(g')$. Since the set of positions is finite, so is $\lbrace F_{g'} \ | \ g \move{u} g' \rbrace$ and there exists an upper bound $F' \in P$ such that $e_{g'} \in \ \uparrow F'(g')$ for each successor $g'$ of $g$. Thus, $e \in \ \uparrow \mathsf{Iteration} (F') (g) \subseteq \ \uparrow \left( \sup_\preceq \lbrace \mathsf{Iteration} (F) \ | \ F \in P \rbrace \right) (g)$.

    With this, we can apply \autoref{lem:kleene} and conclude that a least fixed point, namely $\sup_\preceq \lbrace \iteration^i (\mathbf{0}) \ | \ i \in \N \rbrace$, exists. We now show, that the least fixed point is $\winamin$ by first arguing that $\winamin$ is a fixed point. For this, we introduce inductively defined sets $S(g)$ for $g\in G$ mimicking the process of the iteration of the while-loop without taking the minimum: 
    \begin{mathparpagebreakable}
    \inferrule{
      g \in G_a\\
      g \move{u} g'\\
      e' \in S(g')
    } {
      u^{\circlearrowleft} (e') \in S(g)
    }\and
    \inferrule{
      g \in G_d\\
    \forall g'. \  g \move{u} g' \longrightarrow e_{g'}\in S(g')
    } {
      \left( \sup \lbrace u^{\circlearrowleft}(e_{g'}) \ | \ g \move{u} g' \rbrace \right)  \in S(g)
    }
  \end{mathparpagebreakable}
    Then, $S (g) \subseteq \wina (g)$ for all $g\in G$ follows from a simple induction over the structure of $S$, utilizing the inductive characterization of attacker winning budgets as well as \autoref{lem:upwards-closure} and~\ref{lem:galoisproperties}. 
    Further, $\wina(g) \subseteq \ \uparrow S (g)$ can be shown by a well-founded induction over the strategy-induced order introduced in the proof of \autoref{lem:inductive_wina} using \autoref{lem:galoisproperties}. 
    With \autoref{lem:upwards-closure} we can conclude $\wina(g) = \ \uparrow S (g)$ and thereby $\winamin = \Min \circ \ S$. 
    Further, the definition of $S$ directly implies that $\Min \circ \ S$ (and therefore $\winamin$) is a fixed point of $\iteration$.
    For any fixed point $F$ of $\iteration$, we have $S(g) \subseteq {\uparrow F(g)}$ for all $g \in G$ by induction over the structure of $S$. Thus, $\Min \circ \ S$ (and therefore $\winamin$) is the least fixed point of $\iteration$.

    Finally, we argue for termination by noting that there exists $i_g^e \in \N$ such that $e \in \iteration^{i_g^e}(\mathbf{0})(g)$ for each $g\in G$ and $e \in \winamin (g)$, since $\winamin = \sup_\preceq \lbrace \iteration^i (\mathbf{0}) \ | \ i \in \N \rbrace$. Thus, a fixed point is reached after at most $\smash{\displaystyle\max_{g \in G} \max_{e\in \winamin (g)} i_g^e}$ iterations of the while-loop. \qedhere
\end{proof}

\subsection{Isabelle/HOL Formalization}
\label{subsec:isabelle}

The proof of decidability of Galois energy games (\autoref{thm:decidable}) is formalized in Isabelle/HOL.
Isabelle~\cite{isabelle} is a generic interactive proof assistant supporting the formalization of mathematical theories. Theorems formalized in Isabelle undergo automated verification, ensuring that edge cases are not overlooked. While Isabelle supports multiple logical frameworks, we use the instantiation Isabelle/HOL based on Higher Order Logic (HOL). 
The Isabelle theories are available at \url{https://github.com/crmrtz/galois-energy-games}.

Our Isabelle formalization starts by providing the first formalization of this kind for any type of energy game. This includes the inductive characterization of attacker winning budgets (\autoref{lem:inductive_wina}) assuming energy-positional determinacy (\autoref{lem:pos-determinacy}). 
Fixing an arbitrary Galois energy game over a well-founded bounded join-semilattice we then formalize the function $\iteration$ introduced in the proof of \autoref{thm:decidable}. 
We formalize the correctness of \autoref{alg:game-algorithm} by showing that the minimal attacker winning budgets are the least fixed point of $\iteration$ (correctness) and show termination.
Thus, we have formalized the decidability of such Galois energy games. 
Finally, we formalize Galois energy games over vectors of (extended) naturals with the component-wise order and conclude decidability, particularly for energy games with vector addition and taking minima of components only. 

\begin{table}[h]
    \begin{adjustbox}{max width=\textwidth, center}
    \centering
    \begin{tabular}{c|c|c|c}
         & Result & Formalization link & Proof adjustments\\
         \hline
        \autoref{lem:pos-determinacy} & Energy-positional determinancy & - \\
        \autoref{lem:upwards-closure} & Upwards-closure of $\wina$& \href{https://github.com/crmrtz/galois-energy-games/blob/db18c53dae10f826cafaff11899e9ebfb274b589/isabelle-theories/Energy_Game.thy#L685}{\texttt{upward\_closure\_wb\_nonpos}} \\
        \autoref{lem:inductive_wina} & Inductive characterization of $\wina$& \href{https://github.com/crmrtz/galois-energy-games/blob/db18c53dae10f826cafaff11899e9ebfb274b589/isabelle-theories/Energy_Game.thy#L2471}{\texttt{inductive\_winning\_budget}} & assuming \autoref{lem:pos-determinacy} \\
        \autoref{lem:galoisproperties} & Galois connection equivalences & \href{https://github.com/crmrtz/galois-energy-games/blob/db18c53dae10f826cafaff11899e9ebfb274b589/isabelle-theories/Galois_Energy_Game.thy#L61}{\texttt{galois\_properties}} & only $1. \longrightarrow 2.$\\
        \autoref{lem:kleene} & Kleene's fixed point theorem & by Yamada and Dubut~\cite{kleene-fp-AFP} \\
        \multirow{2}{*}{\autoref{thm:decidable}} & \multirow{2}{*}{Decidability}  & Correctness: \href{https://github.com/crmrtz/galois-energy-games/blob/db18c53dae10f826cafaff11899e9ebfb274b589/isabelle-theories/Decidability.thy#L3674}{\texttt{a\_win\_min\_is\_lfp}} & \multirow{2}{*}{assuming \autoref{lem:pos-determinacy}} \\ 
        & & Termination: \href{https://github.com/crmrtz/galois-energy-games/blob/db18c53dae10f826cafaff11899e9ebfb274b589/isabelle-theories/Decidability.thy#L3455} {\texttt{finite\_iterations}} & 
    \end{tabular}
    \end{adjustbox}
    \caption{Overview of the formalization}
    \label{table:formalization}
\end{table}
\autoref{table:formalization} provides an overview of all previously discussed lemmas and theorems, along with information on which ones have been formalized and links to their formalization.

\section{Instances of Galois Energy Games}
\label{sec:instances}

Let us discuss how Galois energy games and the algorithm can be used to solve quantitative problems by instantiating energy orders and updates over vectors of naturals.

For the illustration, we will take an extremely classical problem and two recent ones.
In \autoref{subsec:shortest-path}, we show how shortest paths problems on graphs are encoded as single-player energy games and relate the latter to vector addition systems with states (VASS).
As more recent problems, \autoref{subsec:instantiations} will discuss the declining energy games of Bisping~\cite{bens-algo} and the multi-weighted reachability games of Brihaye and Goeminne~\cite{mult-reach-games}, in a generalized version.

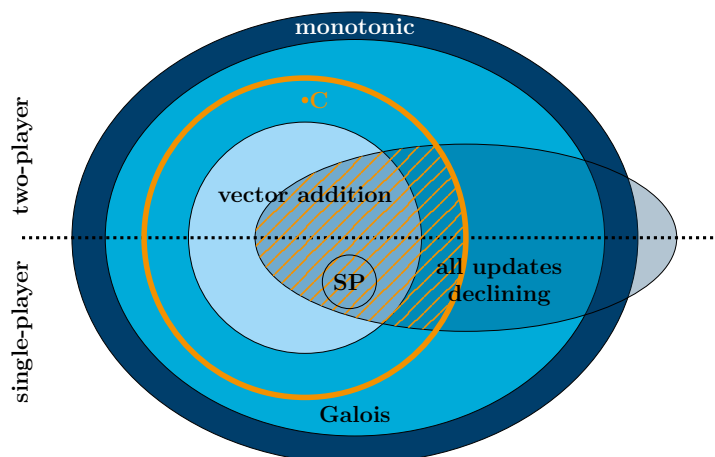
\begin{figure}[h]
\begin{adjustbox}{center, trim={0mm 2mm 0mm 0mm}}
\adjustbox{max width=0.8\textwidth, max height=6cm}{
\centering
    \begin{tikzpicture}
    \begin{scope}[
    mytext/.style={text opacity=1,font=\Large\bfseries}]

        \draw[fill=blue1, draw = black] (0,0) ellipse (5.1cm and 4.1cm);
        \draw[fill=blue2, draw = black] (0,0) ellipse (4.5cm and 3.6cm);
        \draw[fill=blue3, draw = black,name path=circle 1] (-0.9,0) circle (2.1);
        \draw[fill=blue1, draw = black,name path=circle 2, fill opacity=0.3] (2.0,0) ellipse (3.8cm and 1.7cm);
        \draw[black,ultra thick, dotted] (-6,0)--(6.5,0);

        \pgftransparencygroup
       \clip (1.8,0) (2.0,0) ellipse (3.8cm and 1.7cm);
       \filldraw [orange2, pattern={Lines[distance=2mm, angle=45, line width=0.3mm]}, pattern color=orange2](-0.9,0) circle (2.9);
        \endpgftransparencygroup
        \draw[orange2, line width=1mm](-0.9,0) circle (2.9);

        \node[mytext, white] at (0,3.85) (mono) {monotonic};
        \node[mytext] at (0,-3.2) (G) {Galois};
        \node[mytext, align=center] at (2.6,-0.8) (decl) {all updates\\ declining};
        \node[mytext] at (-0.9,0.8) (E) {vector addition};
        \node[mytext, rotate=90] at (-6,1.5) (2) {two-player};
        \node[mytext, rotate=90] at (-6,-1.7) (2) {single-player};
        \node[mytext, draw=black, circle] at (-0.1,-0.8) (SP) {SP};
        \node[mytext, orange2] at (-0.65,2.5) (C) {C};
        \draw[fill=orange2, draw=orange2] (-0.9,2.5) circle (0.05);
    \end{scope}
    \end{tikzpicture} 
}
\end{adjustbox}
    \caption{Energy games over vectors of naturals with the component-wise order}
    \label{fig:venn-games}
\end{figure}

\autoref{fig:venn-games} lays out the kinds of energy games to be discussed: 
The yellow circle includes all energy games with updates that mix positive, negative, and $\min$-updates, in particular, our example energy game of the researcher getting coffee placed as yellow \textbf{C}.
The crosshatched area includes the “declining energy games” discussed by Bisping~\cite{bens-algo}, where the subclass with vector addition only corresponds to the games discussed by Brihaye and Goeminne~\cite{mult-reach-games}. 
The bottom half-circle of vector addition contains energy games solving the coverability problem on VASS and \textbf{SP} in that set refers to the single-player energy games corresponding to shortest path problems for the one-dimensional case as will be discussed in \autoref{subsec:shortest-path} 

Note that the class of monotonic energy games is decidable.
For this, Lemke~\cite{Lemke2024} generalizes \autoref{alg:game-algorithm} by calculating sets of the form $\Min \lbrace e \in \E \ | \ e' \leq u(e) \rbrace$ if a unique minimum does not exist, which adds another level of combinatorial explosion.
For details on the boundaries of decidability for monotonic games, we refer to Abdulla et al.~\cite{monotonic-games}.
We focus our attention on Galois energy games, which generalize previous results in a simple yet concrete manner.

\subsection{Singe-Player Instances}\label{subsec:shortest-path}

\emph{Single-player} energy games are energy games, where all defender positions are deadlocks and, therefore, only the attacker makes choices.

\paragraph*{Shortest Paths and Distances}
Shortest path and shortest distance problems~\cite{mohri2002semiringShortest} are classical quantitative problems.
These can be seen as instances of “single-player” energy games, where the minimal attacker winning budgets correspond to a kind of \emph{shortest distances} from a position to any defender position.
Let us illustrate this for the one-dimensional case:

\begin{proposition}
    \label{prop:shortest-distance}
    Assume we have a graph $(V, E)$ with $E \subseteq V \times \N \times V$,
    and ask for the length of shortest paths from some source node $s \in V$ to target node $t \in V$.
    Consider the energy game $(V \cup \{\bot\}, V, \move{ \ })$ over $(\N, \leq)$ with $v \move{-w\ } v'$ if $(v,w,v') \in E$ and $t \move{\ 0\ } \bot$.
    Then, $e \in \winamin(s)$ on the derived game precisely if $e$ is the shortest distance from $s$ to $t$ in the original graph.
\end{proposition}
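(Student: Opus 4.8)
The plan is to compute $\winamin(s)$ directly from the definitions, exploiting that the derived game is single-player, rather than running \autoref{alg:game-algorithm}. First I would record the elementary structure of the game $(V \cup \{\bot\}, V, \move{\ })$: the only defender position is $\bot$, and it is a deadlock, so by \autoref{def:winning} the attacker wins a play exactly when that play is finite and ends in $\bot$ — equivalently, when it reaches $t$ and then takes the edge $t \move{0} \bot$ — with the energy level defined throughout. Since every other position lies in $G_a$, fixing an attacker strategy from $s$ determines a unique consistent play, so an attacker strategy is, for our purposes, nothing but a choice of a walk out of $s$ (it may be history-dependent, but only the consistent play matters for winning).

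Next I would establish the key characterization
\[
  \wina(s) = \bigl\{\, e \in \N \;\bigm|\; \text{there is a walk } s = v_0 v_1 \cdots v_k = t \text{ in } (V,E) \text{ with edge weights } w_1,\dots,w_k \text{ and } \textstyle\sum_{i=1}^k w_i \le e \,\bigr\}.
\]
For ``$\supseteq$'': given such a walk, the positional strategy that follows it is winning, because on $(\N,\le)$ the update $-w$ has domain $\{\, e' \mid e' \ge w \,\}$, the partial sums $\sum_{i\le j} w_i$ are non-decreasing (the weights are natural numbers), so $e \ge \sum_{i=1}^k w_i$ guarantees that every prefix keeps the energy level defined and the play ends in $\bot$. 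For ``$\subseteq$'': if $e \in \wina(s)$, pick a winning strategy; the induced consistent play cannot be infinite (infinite plays are lost by the attacker) and cannot end in a deadlocked attacker position, hence it ends in $\bot$, tracing a walk $s \cdots t$; the energy level being defined at each step forces $e \ge \sum_{i=1}^{j} w_i$ for every prefix, in particular $e \ge \sum_{i=1}^{k} w_i$. (This is exactly \autoref{lem:inductive_wina} read in the single-player case — $\wina(\bot) = \N$ and $e \in \wina(v)$ iff $u(e) \in \wina(v')$ for some $v \move{u} v'$ — so one may equivalently argue by induction over that characterization; computing $u^{\circlearrowleft}(e') = e' + w$ and invoking the algorithm yields the same set.)

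Finally I would conclude. If $t$ is reachable from $s$, the set of walk-weights $\{\sum_i w_i \mid \text{walk } s\cdots t\} \subseteq \N$ is non-empty and hence has a minimum $d$; since non-negative weights mean that deleting cycles never increases total weight, $d$ is realized by a simple path and equals the shortest-path distance from $s$ to $t$. Then $\wina(s) = \{\, e \mid e \ge d \,\} = \uparrow\{d\}$, so $\winamin(s) = \Min \wina(s) = \{d\}$, i.e.\ $e \in \winamin(s)$ iff $e = d$ iff $e$ is the shortest distance. If $t$ is unreachable, both sides are empty/undefined and there is nothing to prove. The only genuinely delicate step — and the one I would write out most carefully — is the ``$\subseteq$'' inclusion: one must argue that a winning attacker strategy necessarily produces a terminating play that reaches $\bot$ (ruling out infinite plays and attacker deadlocks) and then translate the ``energy level never undefined'' condition into the numeric inequality, using monotonicity of the partial sums of the non-negative weights; everything else is routine bookkeeping.
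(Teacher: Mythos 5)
Your proof is correct. Note that the paper itself states \autoref{prop:shortest-distance} without proof (the appendix only proves \autoref{lem:pos-determinacy} and \autoref{thm:complexity}), so there is no official argument to compare against; your direct, elementary derivation of $\wina(s)$ as the upward closure of the set of walk-weights from $s$ to $t$ is the natural route, and you correctly identify and handle the one delicate point (a winning strategy's unique consistent play must be finite, must end in the sole defender deadlock $\bot$, and definedness of the energy level along the way is equivalent to $e$ dominating every prefix sum, which by non-negativity of the weights reduces to dominating the total). Two cosmetic quibbles only: since $\move{\ }$ is a partial function on $G \times G$, parallel edges of $(V,E)$ with distinct weights need the same node-splitting fix the paper uses for its running example; and the strategy realizing an arbitrary (non-simple) walk is history-dependent rather than positional, though for the minimizing walk a simple path, and hence a positional strategy, suffices. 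Neither affects the argument.
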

On this instance, our \autoref{alg:game-algorithm} boils down to a single-destination variant of the Bellman--Ford algorithm for shortest paths.

\begin{remark}
   Mixing positive and negative weights, the construction of \autoref{prop:shortest-distance} encounters differences to classical shortest-path solutions, which assume commutativity of edge costs.
    For example, in the graphs $s \xrightarrow{\mathmakebox[1em]{2}} \circ \xrightarrow{\mathmakebox[1em]{-1}} t$ and $s \xrightarrow{\mathmakebox[1em]{-1}} \circ \xrightarrow{\mathmakebox[1em]{2}} t$, Bellman--Ford will consider both $s$-$t$-paths to have length~$1$.
   However, the energy game interpretation leads to a different result:
   Moving $\move{\mathmakebox[1em]{-2}}\move{\mathmakebox[1em]{1}\vphantom{-2}}$ requires an attacker budget of $2$, while $\move{\mathmakebox[1em]{1}\vphantom{-2}}\move{\mathmakebox[1em]{-2}}$ only needs~$1$.
   In some cases the energy interpretation is more accurate in reality:
   If one has an empty tank, it makes a difference whether one can refuel now or only later.

   Also, conventional shortest-path algorithms struggle with negative cycles, which correspond to positive cycles in our games.
   In the energy interpretation, we easily determine the cost to reach a target even with an infinite charging cycle:
   The cost of getting to the cycle.  
\end{remark}

\paragraph*{Vector Addition Systems with States}

The single-player version of a multi-dimensional energy game over $\N^n$ with the component-wise order $\leq_c$ where edges are labeled by functions adding a vector in $\Z^n$ can be used to answer the \emph{coverability problem}~\cite{kunnemann2023coverability} on VASS.
In VASS, this problem asks whether a target position can be reached from a given initial configuration (position and energy) with at least a fixed energy.

\begin{proposition}
    \label{prop:coverability-VASS}
    Let $n\in \N$. Assume we have a VASS $(Q,T)$ with $T\subseteq Q \times \Z^n \times Q$ with a fixed initial configuration $(s,e_s) \in Q \times \N^n$ and a target configuration $(t,e_t) \in Q \times \N^n$. 
    Consider the energy game $(Q \cup \{\bot\}, Q, \move{ \ })$ over $(\N^n, \leq_c)$ with $v \move{\ w\ } v'$ if $(v,w,v') \in T$ and $t \move{\ -e_t\ } \bot$.
    Then, $e_s \in \wina(s)$ on the derived game precisely if coverability in the original VASS holds.
\end{proposition}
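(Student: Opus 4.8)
The plan is to unfold the winning condition of the derived game — which is \emph{single-player}, since $\bot$ is the only defender position and it is a deadlock — and read off a configuration-by-configuration correspondence with the valid covering runs of the VASS.

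First I would observe that all positions of $Q$ are attacker positions while $\bot$ is a deadlocked defender position, so a (deterministic) attacker strategy for $\mathcal{G}[s,e_s]$ induces a unique consistent play from $s$; by \autoref{def:winning} this strategy is winning exactly when that play is finite, ends with the move into $\bot$, and keeps the energy level defined throughout, every other outcome (an infinite play, or getting stuck in a state of $Q$ without outgoing transition — a deadlocked attacker position) being lost. Conversely, any such play is realized by some strategy. Hence $e_s \in \wina(s)$ if and only if there is a finite play $s = v_0 \move{w_0} v_1 \move{w_1} \cdots \move{w_{k-1}} v_k \move{u} \bot$ along which every energy level $c_i$ is defined, where $c_0 = e_s$ and $c_{i+1}$ is the image of $c_i$ under the $i$-th update.

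Next I would analyse such a play. As $t \move{-e_t} \bot$ is the only edge into $\bot$, necessarily $v_k = t$ and $u = (e \mapsto e - e_t)$, and each earlier step $v_i \move{w_i} v_{i+1}$ stems from a transition $(v_i, w_i, v_{i+1}) \in T$. Definedness of the step at $c_i$ (for $i < k$) means exactly $c_{i+1} = c_i + w_i \in \N^n$, i.e.\ the counters stay non-negative; definedness of the last step means $c_k - e_t \in \N^n$, i.e.\ $e_t \leq_c c_k$. So $(v_0, c_0), \dots, (v_k, c_k)$ is precisely a valid run of $(Q, T)$ from $(s, e_s)$ to some $(t, c_k)$ with $e_t \leq_c c_k$, i.e.\ a witness that $(t, e_t)$ is coverable from $(s, e_s)$; and conversely every such covering run, extended by the single move to $\bot$ (legal because its final counter vector dominates $e_t$), is such a play. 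Both implications are then immediate.

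I do not expect a real obstacle: the two points that need care are the single-player reduction — that winning strategies and winning plays coincide here, which rests on classifying $\bot$ as a deadlocked defender position and stuck $Q$-states as deadlocked attacker positions — and pinning down the notion of coverability, in particular that a VASS run must keep all counters in $\N^n$, which is exactly what makes the domains of the energy updates line up.
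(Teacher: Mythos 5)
Your proposal is correct: the paper states \autoref{prop:coverability-VASS} without proof, treating it as immediate from the construction, and your argument---reducing winning strategies to winning plays in the single-player game and matching defined energy levels step-by-step with non-negativity of counters along a covering run---is exactly the intended justification. The two points you single out as needing care (the strategy/play coincidence at deadlocked positions and the final move's definedness encoding $e_t \leq_c c_k$) are indeed the only substantive ones, and you handle both correctly.
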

The \emph{reachability problem}~\cite{Blondin-PSPACE} on VASS aks whether a configuration is reachable from another. Since the target energy has to be reached exactly, this is a stricter version of coverability. 
As in \autoref{prop:coverability-VASS}, 
this can be modeled as an energy game by labeling the edge $t \move{\quad} \bot$ with a function to check if the current energy matches the target (that is undefined if not).
However, since such a function lacks an upward-closed domain, our results are not applicable and we refer to existing literature~\cite{Blondin-PSPACE, czerwinski2025reachability, czerwinski2022reachability} instead.

\subsection{Energy Games over Vectors of Naturals}
\label{subsec:instantiations}

In recent research, energy games with reachability winning conditions have garnered some attention~\cite{bens-algo, mult-reach-games}.
We now demonstrate that we generalize their results. We do so with a syntactical approach to define a set of updates permitted in Galois energy games over vectors of naturals. 
We introduce the set $\mathcal{U}_n$ of possible updates that allow changing entries in two ways: By adding an integer or replacing it with the minimum of components. Such updates can be represented as vectors where each entry contains the information how the respective component is updated. 
First fix a set of energies: 
The set of vectors of extended naturals $\N_\infty^n$ of dimension $n \in \N$ with the component-wise order $\leq_c$.

\begin{definition}
    Let $n\in \N$ and $\mathcal{U}_n := \left(\Z \cup 2^{\lbrace0,..., n-1\rbrace} \setminus \lbrace \varnothing \rbrace \right) ^n$. Then, we interpret $u= (u_0, ..., u_{n-1} ) \in \mathcal{U}_n$ as a partial function $u: \N_\infty^n \rightarrow \N_\infty^n$ by setting
    \begin{equation*}
        u ( e_0, ..., e_{n-1}) := (e_0', ... , e_{n-1}') \text{ with } e_i' := \begin{cases}
        e_i + z, &\text{if } u_i = z \land -z \leq e_i \\
        \min_{k \in D} e_k, &\text{if } u_i = D \subseteq \lbrace 0, ..., n-1 \rbrace
    \end{cases}
    \end{equation*}
    if all $e_i'$ are defined.
\end{definition}
As a consequence of \autoref{thm:decidable}, we obtain the following.
\begin{corollary}\label{col:U_nGaloisgames}
    Let $n\in \N$ and let $\mathcal{G}=(G, G_a, \move{ \ } )$ be an energy game over $(\N_\infty^n, \leq_c)$ with updates in $\mathcal{U}_n$. Then, $\mathcal{G}$ is a Galois energy game. Further, if the set of positions is finite, the game is decidable.\footnote{This is formalized using \autoref{rem:composition} to establish a subclass relationship, i.e.\ a  \href{https://github.com/crmrtz/galois-energy-games/blob/db18c53dae10f826cafaff11899e9ebfb274b589/isabelle-theories/Natural_Galois_Energy_Game.thy\#L105}{\texttt{sublocale}} relationship.} 
\end{corollary}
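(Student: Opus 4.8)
The plan is to verify the conditions of \refDef{def:galois-games} one update at a time, after which both claims are immediate: $\mathcal{G}$ being a Galois energy game follows by definition, and decidability for finite $G$ is \refThm{thm:decidable} (the carrier $(\N_\infty^n,\leq_c)$ is a well-founded bounded join-semilattice, so the setting of \refSec{sec:energygames} applies). Every $u=(u_0,\dots,u_{n-1})\in\mathcal{U}_n$ acts coordinatewise — output coordinate $i$ is either a shift $e_i+u_i$ of a single input coordinate, or a minimum $\min_{k\in u_i}e_k$ over a nonempty set of input coordinates — so the argument splits into per-coordinate bookkeeping. For a fixed such $u$ I would establish: (i) $\dom(u)$ is upward-closed; (ii) $u$ is monotonic on $\dom(u)$, so $\mathcal{G}$ is a monotonic energy game; and (iii) $u^{\circlearrowleft}(e')=\min\{e\mid e'\leq_c u(e)\}$ exists and is computable.

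Parts (i) and (ii) are routine coordinatewise checks: $u(e)$ is defined exactly when $e_i\geq -u_i$ for every coordinate $i$ with $u_i\in\Z$ and $u_i<0$, so $\dom(u)$ is an upward-closed box; and if $e\leq_c\tilde e$ with $u(e)$ defined, the same inequalities hold for $\tilde e$, hence $u(\tilde e)$ is defined, and coordinatewise $e_i+u_i\leq\tilde e_i+u_i$ and $\min_{k\in D}e_k\leq\min_{k\in D}\tilde e_k$, so $u(e)\leq_c u(\tilde e)$.

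For (iii) I would unfold ``$u(e)$ defined and $e'\leq_c u(e)$'' into a finite conjunction of lower bounds on the coordinates of $e$: an output coordinate $i$ with $u_i=z\in\Z$ contributes $e_i\geq e_i'-z$, plus $e_i\geq -z$ when $z<0$; an output coordinate $i$ with $u_i=D$ contributes $e_k\geq e_i'$ for each $k\in D$. Collecting per input coordinate $j$, let $\ell_j\in\N_\infty$ be the maximum of $0$, of $e_j'-u_j$ and $-u_j$ in case $u_j\in\Z$, and of all $e_i'$ for which $u_i$ is a set containing $j$; this is a maximum of finitely many computable elements of the totally ordered set $\N_\infty$, so it exists and is computable (entries $e_i'=\infty$ are absorbed without trouble). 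One then shows the set equality $\{e\mid u(e)\text{ defined},\,e'\leq_c u(e)\}=\{e\mid e_j\geq\ell_j\text{ for all }j\}$; the latter set has least element $(\ell_0,\dots,\ell_{n-1})$, which is therefore the minimum of the former, i.e.\ $u^{\circlearrowleft}(e')=(\ell_0,\dots,\ell_{n-1})\in\dom(u)$. This completes (iii) and hence the proof.

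The only delicate step is (iii): getting $\ell_j$ right when a coordinate is simultaneously hit by a negative shift and referenced by several $\min$-updates, handling $\infty$ entries of $e'$, and — most easily overlooked — checking the nontrivial ``$\supseteq$'' direction of the set equality, which is exactly what makes the coordinatewise-maximal lower bound an attained minimum rather than a mere infimum. This is also where the Isabelle formalization takes the alternative route flagged in the footnote, deriving the corollary as a \texttt{sublocale} instance via \refRem{rem:composition} rather than repeating the Galois computation from scratch.
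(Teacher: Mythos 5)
Your proposal is correct and follows essentially the same route as the paper: check monotonicity and upward-closedness of the domain coordinatewise, then exhibit the explicit coordinatewise formula $u^{\circlearrowleft}(e')_i = \max(\{e_i'-z \mid u_i = z\} \cup \{e_j' \mid i \in u_j\} \cup \{0\})$ and invoke \refThm{thm:decidable} for decidability. Your version merely spells out the details the paper leaves implicit (the domain bound $-z$ is absorbed by $e_i'-z$ when $z<0$, and the verification that the coordinatewise lower bound is attained), so there is no substantive difference.
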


\begin{proof}
We argue that $\mathcal{G}$ is a Galois energy game: 
By definition, all updates in $\mathcal{U}_n$ are monotonic with an upward-closed domain. 
Generalizing the inversion in \cite{bens-algo}, we show that ${u^\circlearrowleft (e') = \min \lbrace e \ | \ e' \leq u(e) \rbrace}$ exists for all $u \in \mathcal{U}_n$ and $e' \in \E$ by giving a computable calculation. 
 
    Let $u$ be an update in $\mathcal{G}$, i.e.\ $u=(u_0, ..., u_{n-1}) \in \mathcal{U}_n$. For $e'=(e_0', ..., e_{n-1}') \in \N_\infty^n$ we set $u^\circlearrowleft (e') $ such that the $i$-th component is the maximum of 
        \begin{itemize}
        \item $e_i' -z$, if this is not negative and the update adds $z$ in the $i$-th component,
        \item $e_j'$, if the $i$-th component is used when taking the minimum resulting in the $j$-th component, 
        \item $0$ else.
    \end{itemize}
    That is
    \begin{equation*}
            u^\circlearrowleft (e')_i := \max ( 
            \lbrace e_i' - z \mid u_i = + z \in +\Z \land z \leq e_i' \rbrace 
            \cup 
        \lbrace e_j' \mid i \in u_j \subseteq \lbrace 0, ..., n-1\rbrace \rbrace 
            \cup \lbrace 0 \rbrace ). \qedhere
    \end{equation*}
\end{proof}

\begin{remark}
    All our arguments for $\mathcal{U}_n$ still hold, if we allow multiplication with a nonzero natural. When calculating $u^\circlearrowleft (e')_i$ we then have to add $ \ceil[\big]{\frac{e_i '}{m}}$ to the maximum, if the update multiplies the $i$-th component with $m$.
\end{remark}

\begin{example}\label{exm:composition}
    Note that all occurring updates in \autoref{exm:running-example} aside from the loop at $\mathsf{CoffeeMaker}$ are elements of $\mathcal{U}_4$. That loop can be replaced by adding a $\mathsf{Brew}$-position such that $\mathsf{CoffeeMaker} \move{+1 \, \mathsf{Shots} \,} \mathsf{Brew} \move{\mathsf{Shots \leftarrow \min \lbrace \mathsf{Shots}, \mathsf{Cups}\rbrace}} \mathsf{CoffeeMaker}$. 
\end{example}
Actually, the adjustment of the game graph described in \autoref{exm:composition} is not necessary to prove the game to be a Galois energy game given that Galois connections compose.

\begin{lemma}\label{lem:composition}
    Let $u_1, u_2 : \E \rightarrow \E$ be partial functions with upward-closed domains such that $u_i^\circlearrowleft$ and $u_i$ forms a Galois connection between $\E$ and $\dom (u_i)$ for $i\in \lbrace 1,2\rbrace$.
    Then, $(u_2 \circ u_1 )^\circlearrowleft := u_1^\circlearrowleft \circ u_2^\circlearrowleft$ and $u_2 \circ u_1$ form a Galois connection between $\E$ and the reverse image under $u_1$ of $\dom (u_2)$, i.e.\ $u_1^{-1} (\dom (u_2))$.\footnote{
    This is formalized as \href{https://github.com/crmrtz/galois-energy-games/blob/db18c53dae10f826cafaff11899e9ebfb274b589/isabelle-theories/Galois_Energy_Game.thy\#L212}{\texttt{galois\_composition}}.}
\end{lemma}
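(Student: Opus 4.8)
The plan is to run the classical ``adjunctions compose'' argument, taking care of the partial domains. First I would pin down $\dom(u_2 \circ u_1)$: since by convention $u_1^{-1}(S) \subseteq \dom(u_1)$, the value $(u_2 \circ u_1)(e)$ is defined exactly when $e \in \dom(u_1)$ and $u_1(e) \in \dom(u_2)$, i.e.\ exactly when $e \in u_1^{-1}(\dom(u_2))$, so $\dom(u_2 \circ u_1) = u_1^{-1}(\dom(u_2))$. This set is upward-closed: $u_1$ is monotonic (\autoref{lem:galoisproperties}) and both $\dom(u_1)$ and $\dom(u_2)$ are upward-closed, so if $e \le e'$ and $e \in u_1^{-1}(\dom(u_2))$ then $e' \in \dom(u_1)$, $u_1(e) \le u_1(e')$, and $u_1(e) \in \dom(u_2)$, hence $u_1(e') \in \dom(u_2)$.

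Second, I would check that $u_1^\circlearrowleft \circ u_2^\circlearrowleft$ actually maps $\E$ into $u_1^{-1}(\dom(u_2))$, not merely into $\dom(u_1)$. Fix $e' \in \E$. The composite $u_1 \circ u_1^\circlearrowleft$ is inflationary (item~2 of \autoref{lem:galoisproperties}), so $u_2^\circlearrowleft(e') \le u_1\bigl(u_1^\circlearrowleft(u_2^\circlearrowleft(e'))\bigr)$; since $u_2^\circlearrowleft(e') \in \dom(u_2)$ and $\dom(u_2)$ is upward-closed, the right-hand side lies in $\dom(u_2)$, i.e.\ $u_1^\circlearrowleft(u_2^\circlearrowleft(e')) \in u_1^{-1}(\dom(u_2))$. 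Thus $(u_2 \circ u_1)^\circlearrowleft$ is a well-defined map $\E \to \dom(u_2 \circ u_1)$.

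Third, the heart of the argument: for $e \in \E$ and $q \in u_1^{-1}(\dom(u_2))$ (so that $q \in \dom(u_1)$ and $u_1(q) \in \dom(u_2)$), the adjunction characterisation (item~1 of \autoref{lem:galoisproperties}) for $u_1$, applied to the pair $(u_2^\circlearrowleft(e), q)$, gives $u_1^\circlearrowleft(u_2^\circlearrowleft(e)) \le q \longleftrightarrow u_2^\circlearrowleft(e) \le u_1(q)$, and the same for $u_2$ applied to $(e, u_1(q))$ gives $u_2^\circlearrowleft(e) \le u_1(q) \longleftrightarrow e \le u_2(u_1(q))$. Chaining the two equivalences yields $(u_1^\circlearrowleft \circ u_2^\circlearrowleft)(e) \le q \longleftrightarrow e \le (u_2 \circ u_1)(q)$, which is precisely item~1 of \autoref{lem:galoisproperties} for the pair $(u_1^\circlearrowleft \circ u_2^\circlearrowleft,\, u_2 \circ u_1)$ between $\E$ and $\dom(u_2 \circ u_1) = u_1^{-1}(\dom(u_2))$; hence they form a Galois connection. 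As a coda, since $u_1^{-1}(\dom(u_2))$ is upward-closed in $\E$, the minimum over this set that characterises the left adjoint (item~4 of \autoref{lem:galoisproperties}) agrees with the minimum over all of $\E$, so $(u_2 \circ u_1)^\circlearrowleft$ has exactly the form demanded in \autoref{def:galois-games}, which is what makes the lemma useful for recognising composites of Galois-game updates.

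I do not expect a genuine obstacle here; the only place needing care rather than routine symbol-pushing is the second step, where the target domain is the proper subset $u_1^{-1}(\dom(u_2))$ of $\dom(u_1)$ and one must invoke the inflationary law together with upward-closure of $\dom(u_2)$ to land inside it. Everything else is the standard fact that adjunctions compose, specialised to partial monotone maps.
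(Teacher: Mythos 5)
Your proof is correct; the paper itself gives no textual proof of this lemma (it defers entirely to the Isabelle fact \texttt{galois\_composition}), and your argument is the standard composition-of-adjunctions chain $(u_1^\circlearrowleft \circ u_2^\circlearrowleft)(e) \le q \leftrightarrow u_2^\circlearrowleft(e) \le u_1(q) \leftrightarrow e \le (u_2 \circ u_1)(q)$, which is surely what the formalization does. You correctly identify and discharge the one non-routine point, namely that $u_1^\circlearrowleft \circ u_2^\circlearrowleft$ lands in $u_1^{-1}(\dom(u_2))$ rather than merely in $\dom(u_1)$, via the inflationary law for $u_1 \circ u_1^\circlearrowleft$ and upward-closure of $\dom(u_2)$.
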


\begin{remark}\label{rem:composition}
    \autoref{lem:composition} implies that it suffices to study simple updates as generators and check that they are permitted in a Galois energy game, to prove decidability. 
    In particular, we may focus on updates in $\{-1,0,+1\}^n \subseteq \Z^n$ that add or subtract at most one in any component to understand all updates in $\Z^n \subseteq \mathcal{U}_n$. 
\end{remark}
Now we can consider instances where multi-weighted energy games with reachability conditions were considered. 
Bisping~\cite{bens-algo} introduces a game generalizing the bisimulation game to simultaneously decide all common notions of behavioral equivalence, i.e.\ in the linear-time-branching-time spectrum~\cite{vanGlabbeek}. In particular, he uses energy games over $(\N_\infty^n, \leq_c)$ with declining updates in $\mathcal{U}_n$. 
The decidability of those games is a direct implication of \autoref{col:U_nGaloisgames}. 

The multi-weighted reachability games described by Brihaye and Goeminne~\cite{mult-reach-games} can easily be transformed to energy games with declining updates in $\Z^n$ by utilizing the construction from \autoref{rem:reachability}. 
When considering the component-wise order, decidability follows directly from \autoref{col:U_nGaloisgames}.
When considering the lexicographical order $\leq_l$,
decidability can be concluded using \autoref{thm:decidable}.

\subsection{Implementations}

\autoref{alg:game-algorithm} has been implemented twice in open source software:
Once, in the \emph{Linear-Time–Branching-Time Spectroscope}, a web tool to decide spectra of behavioral equivalence~\cite{bens-algo}, and once in \emph{gpuequiv}, a GPU-accelerated version thereof~\cite{vogel2024energyGamesWebGPU}.
Both work on energy games of arbitrary-dimensional vectors of naturals.

The \emph{Spectroscope} is implemented in Scala.js, powering the tool on \url{https://equiv.io/}.
Its energy game class supports positive, negative, and $\min$-updates.
\autoref{exm:running-example} is reproduced as a unit test in
\href{https://github.com/benkeks/equivalence-fiddle/blob/main/shared/src/test/scala-2.12/io/equiv/eqfiddle/game/GaloisEnergyGameTests.scala}{\texttt{io.equiv.eqfiddle.game.GaloisEnergyGameTests}}.

\emph{gpuequiv}~\cite{vogel2024energyGamesWebGPU} uses Rust and the WebGPU system to solve declining energy games on the GPU.
The core is an implementation of our algorithm split up across several shaders in order to exploit GPU parallelism in \href{https://github.com/Gobbel2000/gpuequiv/blob/master/src/energygame.rs}{\texttt{gpuequiv::energygame::EnergyGame}}.
For the shader implementation, it is particularly important that the buffer size for the Pareto fronts can be bounded, as will be discussed in \autoref{sec:complexity}.
(In order to conserve memory, the shader will initially reserve space for fronts of up to 64 entries during updates, and must retry positions with more space whenever the allocated space does not suffice.)
So far, gpuequiv only supports declining energy games, for which it includes several tests.
According to benchmarking~\cite[Chapter 5]{vogel2024energyGamesWebGPU}, \emph{gpuequiv} tends to solve the games 10 to 20 times faster than the Scala implementation (even if the latter is compiled and run through JVM instead of JS).

Both implementations increase our confidence that the approach works well in practical scenarios,
and hopefully can be of help to others who want to solve similar problems.

\section{Complexity}
\label{sec:complexity}

The complexity of deciding Galois energy games varies based on the energy order and the updates. 
Since deciding multi-weighted reachability games is PSPACE-complete as stated by Brihaye and Goeminne~\cite{mult-reach-games}, deciding energy games over vectors of naturals with the component-wise order and only vector-subtraction is PSPACE-hard:

\begin{proposition}
    Deciding Galois energy games over $n$-dimensional vectors of naturals with the component-wise order is PSPACE-hard with respect to $n$.
\end{proposition}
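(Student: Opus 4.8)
The plan is to obtain PSPACE-hardness by a polynomial, dimension-preserving reduction from the decision problem underlying multi-weighted reachability games of Brihaye and Goeminne~\cite{mult-reach-games}, routed through the construction of \refRem{rem:reachability} so that the target instances stay inside the class of Galois energy games over vectors of naturals with the component-wise order.

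First I would pin down the source problem: given an $n$-dimensional multi-weighted reachability game $(G, G_a, \move{\ })$ with edges weighted by vectors in $\N^n$, a target set $F \subseteq G$, a start position $g_0$, and a threshold vector $c \in \N^n$, decide whether the attacker can force a play reaching $F$ whose accumulated edge-weight sum is component-wise below $c$. By~\cite{mult-reach-games} this cost-threshold problem is PSPACE-complete, and — this is the point on which I would want to be careful — the hardness instances must be ones in which the dimension $n$ grows with the input size, so that the hardness is genuinely \emph{with respect to $n$}.

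Next I would apply the transformation of \refRem{rem:reachability} verbatim: delete the outgoing edges of positions in $G_d \cap F$, attach a zero-vector self-loop to every remaining deadlock in $G_d \setminus F$, and from each position in $G_a \cap F$ add a zero-vector edge to a fresh deadlocked defender position; finally reinterpret each edge weight $w \in \N^n$ as the vector-subtraction update $-w$. This is computable in polynomial time and leaves the dimension $n$ unchanged; the resulting updates are vector subtractions, hence lie in $\mathcal{U}_n$, so by \refCol{col:U_nGaloisgames} the constructed object is a Galois energy game over $n$-dimensional vectors of naturals with the component-wise order. By the correspondence between minimal ensured cost and $\winamin$ recorded in \refRem{rem:reachability}, a play is won by the attacker exactly when it reaches $F$ within the available budget, so the original threshold question holds iff $c \in \wina(g_0)$, i.e.\ iff the known-initial-credit instance $(g_0, c)$ is positive. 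Chaining this with the PSPACE-hardness of the source problem yields \refProp{} (the proposition above); decidability of the same instances follows from \refThm{thm:decidable}, so there is no tension with the hardness, only with polynomiality in the dimension.

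I expect the main (though modest) obstacle to be the bookkeeping that turns Brihaye and Goeminne's cost-threshold question into the single membership query $c \in \wina(g_0)$: one must verify that the freshly inserted zero-weight edges and self-loops never consume energy, that a defender deadlock in the constructed game is reached precisely when $F$ is reached in the original game, and that the threshold $c$ is carried over unchanged — all of which the construction in \refRem{rem:reachability} is designed to ensure, but which should be checked explicitly. The remaining ingredient to confirm is that the PSPACE-hardness proof of~\cite{mult-reach-games} indeed scales the number of weight components, since it is exactly the dimension parameter $n$ that the statement is about.
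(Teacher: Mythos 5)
Your proposal matches the paper's own argument: the paper derives PSPACE-hardness exactly by citing the PSPACE-completeness of multi-weighted reachability games from Brihaye and Goeminne and invoking the transformation of \autoref{rem:reachability} into an energy game with vector-subtraction updates, which is a Galois energy game by \autoref{col:U_nGaloisgames}. Your additional care about the hardness scaling with the dimension $n$ and about the bookkeeping of the added zero-weight edges is sensible but does not change the route — it is the same reduction, just spelled out more explicitly than the paper's one-sentence justification.
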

In this section, we give more fine-grained upper complexity bounds for the generic algorithm and for the instantiations according to the previous section.

The running time of \autoref{alg:game-algorithm} depends on factors beyond input size, such as the occurring updates which influence the number of iterations needed. To analyze the influence of the order on complexity, we introduce functions that will help establish upper bounds for attacker winning budgets, which are then used in our complexity results.

\begin{definition}
    For $e\in \E$ let $\mathsf{antichain}_{\leq e}$ be the set of antichains in $\lbrace e' \in \E \ | \ e' \leq e \rbrace$. Then, 
    $\h (e)$ is the cardinality of $\lbrace e' \in \E \ | \ e' \leq e \rbrace$ and 
    $\w(e)$ is the maximal cardinality of elements in $\mathsf{antichain}_{\leq e}$, i.e.\ $\w (e) := \textstyle{\sup_\N} \lbrace \relSize{A} \ | \ A \in \mathsf{antichain}_{\leq e}\rbrace$.
\end{definition}
Fixing even more parameters, we can state our complexity result.

\begin{theorem}[Complexity]\label{thm:complexity}
    Let $\mathcal{G}=(G, G_a, \move{ \ } )$ be a Galois energy game over $(\E, \leq)$. Further, let
    \begin{itemize}
        \item $o$ be the branching degree of the underlying graph,
        \item $t_{\sup}$ be an upper bound of the time to compute the supremum of two energies,
        \item $t_{\leq}$ be an upper bound of the time to compute the comparison of two energies,
        \item $t_{{\circlearrowleft}}$ be an upper bound of the time to compute $u^{\circlearrowleft} (e)$ for any update $u$ in $\mathcal{G}$ and $e \in \E$, and
        \item $e_\mathit{worst} := \sup \lbrace u_1^{\circlearrowleft} \circ ... \circ u_i^{\circlearrowleft} (\min \E) \ | \ i \in \lbrace 0, ..., \relSize{G} -1 \rbrace \  \land \ u_1, ..., u_i \text{ are updates in } \mathcal{G} \rbrace$ be the highest energy obtainable using up to $\relSize{G} -1$ reverse updates applied to the least element of $\E$.
    \end{itemize}   
    Then, \autoref{alg:game-algorithm} on input $\mathcal{G}$ terminates in a time in 
    \begin{align*}
        &\mathcal{O}\left(\relSize{G}^2 \cdot o \cdot \h(e_\mathit{worst}) \cdot \w(e_\mathit{worst})^2 \cdot (t_\circlearrowleft + t_{sup} + \w(e_\mathit{worst}) \cdot t_{\leq} )\right),\\
        &\text{and in }\mathcal{O}\left(\relSize{G}^2 \cdot o \cdot \w(e_\mathit{worst})^2 \cdot (t_\circlearrowleft + t_{sup} + \w(e_\mathit{worst}) \cdot t_{\leq} )\right) \text{, if }\mathcal{G} \text{ is declining.}
    \end{align*}
    In both cases, the output is calculated using space in $\mathcal{O} \left( \relSize{G} \cdot \w(e_\mathit{worst}) \cdot s_\E \right)$ where $s_\E$ is an upper bound for the space needed to store any energy.
\end{theorem}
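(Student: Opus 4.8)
The plan is to bound the total running time by a product of three quantities: the number of iterations of the outer \texttt{do}-while loop, the cost of one sweep over all positions, and the cost of updating a single position. For the number of iterations I would argue as follows. By the correctness proof of \autoref{thm:decidable}, the algorithm stabilizes exactly when $\varname{win} = \winamin$. Every energy $e$ that ever appears in any $\varname{win}[g]$ during the run arises as an iterated reverse update $u_1^\circlearrowleft \circ \dots \circ u_i^\circlearrowleft$ applied to a minimal element of $\E$ (that is what lines~\ref{code:attacker-pos-update} and~\ref{code:defender-pos-update} compute, the supremum on the defender side only combining such values, and $\min\E$ being what a deadlock contributes). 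A standard "information travels at least one edge per iteration" argument — formalized in the Isabelle development as \texttt{finite\_iterations} — shows that once an energy is needed in $\winamin(g)$, it appears in $\iteration^i(\mathbf 0)(g)$ for some $i \le \relSize{G}$; hence after at most $\relSize{G}$ iterations the fixed point is reached. Moreover every such energy lies below $e_\mathit{worst}$ by definition of the latter, so each $\varname{win}[g]$ is at all times an antichain inside $\{e' \mid e' \le e_\mathit{worst}\}$, of size at most $\w(e_\mathit{worst})$.

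Next I would bound the cost of one position update. For an attacker position (line~\ref{code:attacker-pos-update}) we range over all $\le o$ outgoing edges and, for each, over the $\le \w(e_\mathit{worst})$ elements of the successor's front, computing one $u^\circlearrowleft$ each: that is $o \cdot \w(e_\mathit{worst})$ evaluations costing $t_\circlearrowleft$, producing a candidate set of size $\le o\cdot\w(e_\mathit{worst})$, from which $\Min$ is extracted by pairwise comparisons in time $\bigo((o\cdot\w(e_\mathit{worst}))^2 \cdot t_\le)$. The defender position (line~\ref{code:defender-pos-update}) is the delicate one: we must range over all choices of one successor-front element per outgoing edge, i.e.\ up to $\w(e_\mathit{worst})^o$ tuples — but this is absorbed because, crucially, the construction of \autoref{rem:composition} lets us assume the branching degree relevant to the blow-up is bounded: rewriting via binary splitting of edges one reduces to $o = 2$ without changing the asymptotics of interest, or alternatively one folds the $o$ into the per-iteration sweep. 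For a fixed defender position with the effective branching kept constant, each tuple costs $\bigo(o\cdot(t_\circlearrowleft + t_{\sup}))$ to evaluate the supremum of $\le o$ reverse updates, the resulting candidate set has size $\bigo(\w(e_\mathit{worst})^2)$ (after the $\Min$), and extracting $\Min$ again costs $\bigo(\w(e_\mathit{worst})^2 \cdot \w(e_\mathit{worst}) \cdot t_\le)$. Collecting: one position update is $\bigo\!\left(o \cdot \w(e_\mathit{worst})^2 \cdot (t_\circlearrowleft + t_{\sup} + \w(e_\mathit{worst})\cdot t_\le)\right)$.

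Finally I would multiply: one sweep touches $\relSize{G}$ positions, so a sweep costs $\relSize{G}$ times the above; and there are $\bigo(\relSize{G})$ sweeps in the general (monotonic) case. This yields the stated
$\bigo\!\left(\relSize{G}^2 \cdot o \cdot \w(e_\mathit{worst})^2 \cdot (t_\circlearrowleft + t_{\sup} + \w(e_\mathit{worst})\cdot t_\le)\right)$
for declining games. For general Galois energy games the number of sweeps needed is not bounded by $\relSize{G}$ alone, because a single energy coordinate can be raised repeatedly along a cycle before saturating; here I would bound the number of sweeps by $\relSize{G}\cdot\h(e_\mathit{worst})$ — each of the $\relSize{G}$ "propagation stages" can be refined at most $\h(e_\mathit{worst})$ times since the total antichain below $e_\mathit{worst}$ can only strictly decrease in the $\preceq$-order that many times — giving the extra $\h(e_\mathit{worst})$ factor. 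The space bound is immediate: at any moment we store $\varname{win}$ and $\varname{old\_win}$, each a table of $\relSize{G}$ antichains of $\le \w(e_\mathit{worst})$ energies of size $\le s_\E$, so $\bigo(\relSize{G}\cdot\w(e_\mathit{worst})\cdot s_\E)$; the transient candidate sets inside one update are larger only by constant powers of $o$ and $\w(e_\mathit{worst})$ and can be discarded, so they do not affect the bound (or one absorbs them, as the theorem's constants permit).

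The main obstacle is the defender-position update and, relatedly, pinning down the iteration count in the non-declining case. The defender line's naive cost is exponential in the branching degree $o$, and making the claimed polynomial-in-$o$ bound honest requires genuinely invoking the edge-splitting reduction of \autoref{rem:composition} (or an equivalent incremental computation of the supremum over successors) and checking it preserves the winning budgets — exactly the point where \autoref{lem:composition} is needed. Separately, justifying the $\h(e_\mathit{worst})$ factor for general games, rather than for declining ones where energies only decrease under reverse updates, needs the careful argument that the $\preceq$-order on $\pareto$ restricted below $e_\mathit{worst}$ has chains of length at most $\relSize{G}\cdot\h(e_\mathit{worst})$; this is the part I expect to require the most care to state cleanly, and it is why the two displayed bounds differ by precisely that factor.
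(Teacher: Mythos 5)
Your overall decomposition --- (number of iterations) $\times$ (positions per sweep) $\times$ (cost per position update) --- and the resulting bounds coincide with the paper's proof, as does the key counting argument: each front is an antichain below $e_\mathit{worst}$ of size at most $\w(e_\mathit{worst})$, and it can be strictly refined at most $\h(e_\mathit{worst})$ times (the maximal chain length of antichains under inclusion of upward-closures), giving $\mathcal{O}(\relSize{G}\cdot\h(e_\mathit{worst}))$ iterations in general and $\mathcal{O}(\relSize{G})$ in the declining case. One small inconsistency: your first paragraph claims the fixed point is reached after at most $\relSize{G}$ iterations, which holds only for the predicate ``which positions have a non-empty front'' (this is what justifies $e_\mathit{worst}$ as a bound on all energies ever assigned), not for the fronts themselves; your final paragraph correctly retracts this, and the paper keeps the two claims separate.

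The substantive soft spot is the defender update. Your primary fix --- ``binary splitting of edges'' justified by \autoref{rem:composition} and \autoref{lem:composition} --- is misdirected: that lemma composes updates \emph{sequentially} along a path ($u_2 \circ u_1$), whereas the blow-up at a defender position comes from the product over its \emph{successors}; splitting a defender node into a binary tree of defender nodes changes $\relSize{G}$ and the winning budgets of the intermediate positions, so it would need its own correctness argument that you do not supply. The paper's actual resolution is the alternative you mention only parenthetically: compute the defender front incrementally, one successor at a time, maintaining a running antichain $\mathsf{new}[g]$ and replacing it at each successor $g'$ by $\Min \lbrace \sup \lbrace e', u^{\circlearrowleft}(e_{g'}) \rbrace \mid e' \in \mathsf{new}[g] \land e_{g'} \in \mathsf{old\_win}[g'] \rbrace$ (the procedure adapted from Brihaye and Goeminne). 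Its correctness rests on monotonicity of $\sup$, each step costs $\mathcal{O}\left(\w(e_\mathit{worst})^2 \cdot (t_\circlearrowleft + t_{\sup} + \w(e_\mathit{worst}) \cdot t_{\leq})\right)$, and the loop over successors contributes exactly the factor $o$; no tuple enumeration and no graph rewriting is needed. With that substitution in place of the edge-splitting claim, your argument matches the paper's.
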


For a detailed proof, we refer to the appendix. 
Compared to the approach by Bisping~\cite{bens-algo}, this result significantly improves complexity -- see \autoref{table:complexity}. The key argument leading to this improvement is the calculation of \autoref{code:defender-pos-update} of \autoref{alg:game-algorithm}, i.e.\ $\varname{win} [\varname{g}]$ for a defender position $\varname{g}$. Inspired by Brihaye and Goeminne~\cite{mult-reach-games} we apply the following procedure:

\begin{algorithm}[H]
  \Fn{$\varname{compute\_new\_win}(\mathcal{G}, \varname{old\_win}, \varname{g})$}{
    $\varname{new}[\varname{g}]:= \Min \E$

    \For{$g'$ with $\varname{g} \move{u} g'$\ }
    {

    $\varname{new}[\varname{g}] := \Min \lbrace \sup \lbrace e', u^{\circlearrowleft}(e_{g'}) \rbrace \ | \ e' \in \varname{new}[\varname{g}]\land  e_{g'} \in \varname{old\_win}[g'] \rbrace $
    
    }
    
    \KwRet{$\varname{new}[\varname{g}]$}
  }
\end{algorithm}

\bigskip
\autoref{table:complexity} compares the complexity results we obtain by \autoref{thm:complexity} to results stated by Bisping~\cite{bens-algo} on declining energy games and Brihaye and Goeminne~\cite{mult-reach-games} on multi-weighted reachability games, respectively. Further, it states the complexity of \autoref{alg:game-algorithm} when applied to a shortest path problem as discussed in \autoref{prop:shortest-distance} and for energy games with updates in $\Z^n \subseteq \mathcal{U}_n$, i.e.\ integer vector addition, subsuming the complexity of solving coverability on VASS as seen in \autoref{prop:coverability-VASS}.

To do so, we first give upper bounds for some variables: In all cases, we over-approximate $o$ with $\relSize{G}$ and assume $t_{\leq_l}, t_{\leq_c}, t_{\sup} \in \mathcal{O}(n)$. We use $w$ as an upper bound of the absolute value of the integers being added to a component at any edge.\footnote{To get a fair comparison we consider the complexity of declining energy games as formally defined by Bisping~\cite{bens-algo} where an update may subtract at most one from a component, i.e.\ $w=1$.} Then, we have $e_\mathit{worst}= w \cdot (\relSize{G}, \ldots, \relSize{G})$ and $\h (e_\mathit{worst}) \in \mathcal{O}\left((w \cdot \relSize{G})^n \right)$. Other upper bounds are stated in \autoref{table:complexity}, where all entries are to be understood in big $\mathcal{O}$-notation.

\begin{table}[h]
    \centering
    \begin{adjustbox}{max width=1.1\textwidth, center}
    \begin{tabular}{c|lllll}
        Game model & $t_{{\circlearrowleft}}$ &  $\w (e_\mathit{worst})$ & time complexity & complexity in cited work
        \\
        \hline
        Declining energies \cite{bens-algo} & $n^2$ &$\relSize{G}^{n-1}$ & $n^2 \cdot \relSize{G} ^{3n} $ & $\relSize{\move{ \ }} \cdot \relSize{G}^{n} \cdot (o+ \relSize{G}^{(n-1) o})$
        \\
        Multi-reachability, component-wise $\leq_c$ \cite{mult-reach-games} & $n$ & $w^{n-1} \cdot \relSize{G}^{n-1}$ & $n \cdot w^{3(n-1)} \cdot  \relSize{G} ^{3n} $ & $n^4 \cdot w^{4n} \cdot \relSize{G}^{4n +1}$
        \\
        Multi-reachability, lexicographic $\leq_l$ \cite{mult-reach-games} & $n$ & $1$ & $n \cdot \relSize{G} ^3 $ & $n \cdot \relSize{G} ^3$
        \\
        Shortest paths (with $\N$-edges)  & $1$ & $1$ & $\relSize{G}^{3}$ & .
        \\
        Integer vector addition, component-wise $\leq_c$& $n$ & $w^{n-1} \cdot \relSize{G}^{n-1}$ & $n \cdot w^{4n-3} \cdot  \relSize{G} ^{4n} $ & .
    \end{tabular}
    \end{adjustbox}
    \medskip
    \caption{Application of \autoref{thm:complexity} to declining classes of energy games.}\label{table:complexity}
\end{table}

\section{How to Handle Other Winning Conditions}
\label{sec:winning-conditions}

Reachability winning conditions are fundamental to the analysis of games, both in their simplicity and their importance. In this section, we examine how other types of winning conditions relate to reachability (\autoref{def:winning}) studying related games. 

\subsection{Within the Scope: Captured Games}

\subparagraph*{Weak Upper Bounds:}
The defender winning if the energy level becomes undefined is another way of saying there is a lower bound for the energy (and the attacker is energy-bound). We now consider energy games with an additional \emph{upper bound}~\cite{automata-bounds, optimalBounds} of resources the attacker may hold.  
A \emph{weak upper bound} caps the energies.
In \autoref{exm:running-example}, the number of cups resembles a weak upper bound of the number of shots.
Similarly, a weak upper bound can be added to any energy game, where the minimum is monotonic, by adding a new dimension for each dimension that should be bounded and composing every update with the function taking the minimum of the dimension and its upper bound.
Then, \autoref{alg:game-algorithm} can be used in the case of both, a known and an unknown weak upper bound.

\subparagraph*{Generalized Reachability:}
\emph{Generalized reachability}~\cite{reachability-games-and-friends} refers to reachability games with multiple target sets where the attacker wins a play if each set has been visited at least once (without running out of energy). Combining the construction in \autoref{rem:reachability} and that for weak upper bounds, an energy game can model a generalized reachability game by encoding the target sets using additional dimensions that track the visited sets. For this construction, we add positions and edges such that the attacker may choose after reaching any target set, whether to continue or to move to the only deadlock defender position at the cost of one in every dimension corresponding to a target set. 
Note that this construction increases the dimension and thereby increases time complexity of our algorithm for computing minimal winning budgets exponentially in the number of target sets.

\subsection{At the Borders: Related but Distinct Games}

\subparagraph*{Strong Upper Bounds:}
We reconsider energy games with an upper bound of the attackers resources.  
Exceeding a \emph{strict upper bound}~\cite{optimalBounds} results in the defender winning.
Such an upper bound can be integrated similarly to a weak upper bound by adding dimensions for the bounded dimensions and instead of taking the minimum adding the difference between the energy and the updated energy to the bound dimension at every edge. (Inspired by Juhl et al.~\cite{optimalBounds} we apply the mapping $b_i \mapsto b_i + (e_i - u(e)_i)$ for each bounded dimension $i$ when $e \mapsto u(e)$ is calculated.)
However, if the updates are more complex than vector addition, this construction might not yield a Galois energy game -- the game might not even be monotonic. 

\subparagraph*{Safety Winning Conditions:}
Instead of aiming to reach a target set, the goal of \emph{safety}~\cite{monotonic-games} winning conditions is to avoid a set of unsafe positions. This corresponds to changing the winner of infinite plays to the attacker (if the energy level does not become undefined). 
Proving decidability then is symmetric to the proofs presented in this paper and can be achieved by adjusting the definitions and statements accordingly. In particular, this includes calculating maximal defender winning budgets instead and defining Galois energy games using \autoref{lem:galoisproperties}.\ref{lem:galoisMax}, i.e.\ as monotonic energy games where $u^\circlearrowright (e') := \max \lbrace e \in \E \uplus \lbrace \bot\rbrace \ | \ u(e) \leq e' \rbrace$ always exists and is computable. (Here $\bot$ represents undefined energies with $\bot < e$ for all $e \in \E$ and $u(\bot) := \bot$.)

\subparagraph*{Parity Games:}
An \emph{energy parity game}~\cite{abdulla-parity-on-integer, schewe2019parity} is an energy game with a ranking function that assigns each position a natural number (called priority) and has a finite image. Then, infinite plays where the energy level never becomes undefined are won by the attacker if and only if the lowest priority appearing infinitely often is even.
A stronger objective for the attacker would be to never visit more than $l$ positions with odd priority before visiting a position with a smaller even priority for a fixed $l\in \N$. Abdulla et al.~\cite{abdulla-parity-on-integer} showed that deciding games with that stronger objective in some cases is equivalent to deciding energy parity games. 
Such energy parity games with vector addition only can  be formulated as an energy game with safety winning conditions as described by Chatterjee et al.~\cite{strategy-synthesis} where dimensions corresponding to the odd priorities are added.
Whether such a procedure can be applied to other kinds of updates remains an open question.

\subsection{Beyond the Horizon: Games Outside the Framework}

\subparagraph*{Both-Bounded Energy Games:}
In \emph{both-bounded energy games}~\cite{kupferman2022energy} both players are energy-bound. This is modeled by assigning two updates to each edge, operating on pairs of current energies -- one for the attacker, one for the defender. Considering such games over vectors of naturals, undecidability arises if one player has one-dimensional energies and the other two, so our results cannot be transferred. 

\subparagraph*{Payoffs of Plays:}
\begin{sloppypar}
To compare infinite plays and decide a winner, different \emph{payoffs}~\cite{bouyer2018average} can be calculated. The \emph{mean-payoff} refers to the limit average gain per edge, i.e.\ $\limsup_{n\rightarrow \infty} \frac{1}{n} EL(\rho, e_0, n)$ for a play $\rho$ w.r.t.\ an initial energy $e_0$, while the \emph{total-playoff} is the limit energy level, i.e.\ $\limsup_{n\rightarrow \infty} EL(\rho, e_0, n)$, and the \emph{average-energy} refers to the limit average energy level, i.e.\ $\limsup_{n\rightarrow \infty} \frac{1}{n} \sum_{i=0}^n EL(\rho, e_0, i)$. Since \autoref{alg:game-algorithm} relies on the fact that it suffices to consider finite plays, our results cannot be transferred directly.
\end{sloppypar}

\section{Conclusion}
\label{sec:conclusion}

In this paper, we presented a simple algorithm for calculating minimal winning budgets, solving the (un)known initial credit problem for energy games over any well-founded bounded join-semilattice.
We generalized two recent approaches~\cite{bens-algo, mult-reach-games} and improved the running time.
Our results demonstrate that Galois connections offer a simple framework for constructing decidable energy games.
Compared to Abdulla et al.'s monotonic games~\cite{monotonic-games} or to Mohri's generalized shortest-distances~\cite{mohri2002semiringShortest}, our work is more concrete in its application.
For a relevant class of energy games over vectors of naturals, we offer an algorithm implementation for practical use.
Additionally, we provide an Isabelle/HOL formalization of the general decidability proof, ensuring confidence in our results.
By exploring various winning conditions and relating games to Galois energy games, we highlighted the theory's aplicability and limitations and identified areas for further research, such as using Galois connections to classify energy parity games.

\bibliography{galois-energy-games}

\appendix

\section*{Appendix}

This appendix contains the proofs of selected results from the main body of the paper.

\paragraph*{Energy-positional Determinacy}

\begingroup
\def\thelemma{\ref{lem:pos-determinacy}}
\begin{lemma}
    Let $\mathcal{G}$ be an energy game over $(\E, \leq)$.
    For all positions $g\in G$ and energies $e \in \E$ either the defender or the attacker wins $\mathcal{G}[g,e]$. Then, there exists an energy-positional winning strategy, i.e.\ a winning strategy that depends only on the current position and energy level. 
\end{lemma}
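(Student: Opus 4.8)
The plan is to reduce energy games to parity games, which are known to be positionally determined. First I would build a (possibly infinite) game arena whose positions are pairs $(g,e)$ where $g$ is a position of $\mathcal{G}$ and $e \in \E$ is a possible energy level, together with a distinguished sink $\bot$ representing an undefined energy level. A move $(g,e) \to (g',e')$ is present whenever $g \move{u} g'$ with $u(e)$ defined and $e' = u(e)$; whenever $g \move{u} g'$ but $u(e)$ is undefined, we route to $\bot$; and $\bot$ carries a self-loop. Deadlocks $(g,e)$ with $g \in G_d$ are where the attacker should win, so I would make them attacker-winning by a suitable priority assignment (e.g.\ via an auxiliary defender sink with an even-priority self-loop reachable only from such deadlocks). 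The ownership of $(g,e)$ is inherited from $g$, and $\bot$ together with all $(g,e)$ with $g\in G_a$ a deadlock is owned so that the defender wins there.

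The key step is to choose the parity condition matching \autoref{def:winning}: the attacker wins exactly the plays that end in a defender deadlock with defined energy, and loses infinite plays and plays reaching $\bot$. I would assign priority $0$ to the auxiliary even sink attached to defender deadlocks, priority $1$ to $\bot$ and to the sinks attached to attacker deadlocks, and priority $1$ (or any odd value) to all remaining positions, so that every infinite play has minimal infinitely-recurring priority odd and is won by the defender, while every finite maximal play is resolved by the parity of the unique sink it reaches. Then a play in this parity game is attacker-winning iff the corresponding play in $\mathcal{G}$ is attacker-winning, and a strategy in the parity game corresponds to a strategy in $\mathcal{G}$ and vice versa, so the attacker wins $\mathcal{G}[g,e]$ iff she wins from $(g,e)$ in the parity game. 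Positional determinacy of parity games (which holds for arbitrary, including infinite, arenas with finitely many priorities) then gives a winner from $(g,e)$ and a positional winning strategy $\sigma$ on the parity arena. Pulling $\sigma$ back along the bijection between parity-arena positions and configuration pairs yields a strategy for $\mathcal{G}$ depending only on the current position and current energy level, i.e.\ an energy-positional winning strategy.

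The main obstacle is purely bookkeeping rather than conceptual: $\E$ may be infinite, so the parity arena is infinite, and one must invoke the version of positional determinacy for parity games that does not assume a finite arena (only a finite set of priorities), and one must be careful that the configuration graph is well-defined as a game arena (every non-deadlock position has a successor — which is exactly why $\bot$ and the auxiliary sinks are added, so that there are no ``unintended'' deadlocks). A secondary subtlety is checking that consistency of plays with strategies transfers faithfully in both directions across the encoding, in particular that a positional parity strategy never needs to look at the finite history, which is immediate since parity positional strategies depend only on the current vertex and the current vertex already records $(g,e)$. I would relegate these routine verifications to the statement ``For more details we refer to the appendix'' and keep the argument at the level of the reduction above.
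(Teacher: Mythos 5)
Your proposal is correct and follows essentially the same route as the paper: both reduce $\mathcal{G}[g,e]$ to a parity game on the configuration graph (pairs $(g,e)$ extended by $\bot$ for undefined energy) and invoke positional determinacy, then pull the positional strategy back to an energy-positional one. The only differences are cosmetic — you resolve terminal configurations via self-looping sinks with priorities $0$ and $1$, whereas the paper assigns priority $0$ everywhere and lets the deadend-ownership convention decide the winner.
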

\addtocounter{lemma}{-1}
\endgroup

\begin{proof}
The (energy-positional) determinacy follows from the positional determinacy of parity games. We define a parity game $\mathcal{P_G} := (G_{\mathcal{G}},G_{\mathcal{G} 0}, E_{\mathcal{G}}, r_{\mathcal{G}})$ where 
    \begin{itemize}
        \item $G_{\mathcal{G}} := \lbrace (g,e) \ | \ g\in G, e \in \E\uplus \lbrace \bot \rbrace \rbrace$,
        \item $G_{\mathcal{G} 0} := G_{\mathcal{G}} \setminus G_{\mathcal{G} 1}$ with $ G_{\mathcal{G} 1} := \lbrace (g,e) \in G_{\mathcal{G}} \ | \ g\in G_a \lor e = \bot \rbrace$,
        \item $E_{\mathcal{G}} := \lbrace \left( (g,e) , (g', e') \right) \in G_{\mathcal{G}} \times G_{\mathcal{G}} \ | \ g \move{u} g' \land e \neq \bot \land u(e) = e' \rbrace$ and
        \item $r_{\mathcal{G}} : G_{\mathcal{G}} \rightarrow \mathbb{N} ,(g,e)\mapsto 0$.
    \end{itemize}  
This construction is similar to that of a configuration graph. Thereby, a play in $\mathcal{P_G}$ corresponds to a play in $\mathcal{G}$ and vice versa. 
Note that player~0 wins all infinite plays in $\mathcal{P_G}$ while the defender wins all infinite plays in $\mathcal{G}$.
By construction a position $(g,e)$ is a deadend in $\mathcal{P_G}$ if and only if $g$ is a deadend in $\mathcal{G}$ or $e=\bot$.
Hence, a play is won by the defender (resp. attacker) in $\mathcal{G}$ if and only if the corresponding play is won by player~0 (resp. player~1) in $\mathcal{P_G}$.

Let $e\in \E$ and $g \in G$.
Since parity games are positionally determined, there exists a positional winning strategy $s$ for player~0 or player~1 when starting in  $(g,e)$. Define $s_p: G^* G_p \rightarrow G$ by setting $s_p (g_0 ... g_n) := g'$ with $s  \left( g_n, EL(g_0 ... g_n, e) \right) = (g',e')$ if $g_n \in G_p$, where $G_p = G_d $ if player~0 wins and $G_p = G_a$ else. Note that this strategy is an energy-positional winning strategy.\qedhere

\end{proof}

\paragraph*{Complexity}

\begingroup
\def\thetheorem{\ref{thm:complexity}}
\begin{theorem}[Complexity]
    Let $\mathcal{G}=(G, G_a, \move{ \ } )$ be a Galois energy game over $(\E, \leq)$. Further, let
    \begin{itemize}
        \item $o$ be the branching degree of the underlying graph,
        \item $t_{\sup}$ be an upper bound of the time to compute the supremum of two energies,
        \item $t_{\leq}$ be an upper bound of the time to compute the comparison of two energies,
        \item $t_{{\circlearrowleft}}$ be an upper bound of the time to compute $u^{\circlearrowleft} (e)$ for any update $u$ in $\mathcal{G}$ and $e \in \E$, and
        \item $e_\mathit{worst} := \sup \lbrace u_1^{\circlearrowleft} \circ ... \circ u_i^{\circlearrowleft} (\min \E) \ | \ i \in \lbrace 0, ..., \relSize{G} -1 \rbrace \  \land \ u_1, ..., u_i \text{ are updates in } \mathcal{G} \rbrace$ be the highest energy obtainable using up to $\relSize{G} -1$ reverse updates applied to the least element of $\E$.
    \end{itemize}   
    Then, \autoref{alg:game-algorithm} on input $\mathcal{G}$ terminates in a time in 
    \begin{align*}
        &\mathcal{O}\left(\relSize{G}^2 \cdot o \cdot \h(e_\mathit{worst}) \cdot \w(e_\mathit{worst})^2 \cdot (t_\circlearrowleft + t_{sup} + \w(e_\mathit{worst}) \cdot t_{\leq} )\right),\\
        &\text{and in }\mathcal{O}\left(\relSize{G}^2 \cdot o \cdot \w(e_\mathit{worst})^2 \cdot (t_\circlearrowleft + t_{sup} + \w(e_\mathit{worst}) \cdot t_{\leq} )\right) \text{, if }\mathcal{G} \text{ is declining.}
    \end{align*}
    In both cases, the output is calculated using space in $\mathcal{O} \left( \relSize{G} \cdot \w(e_\mathit{worst}) \cdot s_\E \right)$ where $s_\E$ is an upper bound for the space needed to store any energy.
\end{theorem}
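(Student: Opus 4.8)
The plan is to bound the total running time by multiplying the number of while-loop iterations by the cost of a single iteration, and to bound the space by the size of the Pareto fronts that must be stored. The crucial preliminary observation is that every energy that can ever appear in any $\varname{win}[g]$ during the run of \autoref{alg:game-algorithm} is bounded above by $e_\mathit{worst}$. This follows from the structure of the set $S(g)$ introduced in the proof of \autoref{thm:decidable}: the minimal attacker winning budgets are $\Min \circ S$, and every element of $S(g)$ is obtained by applying a chain of reverse updates $u_1^\circlearrowleft \circ \dots \circ u_i^\circlearrowleft$ — for defender positions one also takes suprema of such chains, but suprema of energies below $e_\mathit{worst}$ stay below $e_\mathit{worst}$ since $e_\mathit{worst}$ is itself defined as a supremum. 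The fixed-point iteration only ever produces elements that lie in $\uparrow S(g)$ restricted from above, so each stored antichain is an antichain in $\{e' \in \E \mid e' \leq e_\mathit{worst}\}$ and hence has at most $\w(e_\mathit{worst})$ elements; each energy stored needs space $s_\E$, and there are $\relSize{G}$ positions, giving the space bound $\mathcal{O}(\relSize{G} \cdot \w(e_\mathit{worst}) \cdot s_\E)$.

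For the number of iterations: the while-loop runs until a fixed point is reached, and in the worst case the information about a position's winning budget has to travel through a chain of length at most $\relSize{G} - 1$, after which it may take up to $\h(e_\mathit{worst})$ further rounds at a single position before its Pareto front stabilizes (each round strictly decreases some stored energy, and there are at most $\h(e_\mathit{worst})$ energies below $e_\mathit{worst}$ along any descending chain). This yields $\mathcal{O}(\relSize{G} \cdot \h(e_\mathit{worst}))$ iterations in general. When $\mathcal{G}$ is declining, reverse updates can never increase any component, so once an element enters a front it cannot be improved further within a position — the front at each position only grows, bounded by $\w(e_\mathit{worst})$ — and the iteration count drops to $\mathcal{O}(\relSize{G})$. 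Next I would bound the cost of one iteration: we loop over all $\relSize{G}$ positions, and for each position over its $\leq o$ successors. For a defender position, using the $\varname{compute\_new\_win}$ subroutine, we maintain a running front of size $\leq \w(e_\mathit{worst})$ and combine it with a successor's front of size $\leq \w(e_\mathit{worst})$, giving $\w(e_\mathit{worst})^2$ pairs, each requiring one $u^\circlearrowleft$ computation ($t_\circlearrowleft$), one supremum ($t_{\sup}$), and the resulting set of $\leq \w(e_\mathit{worst})^2$ candidates must be minimized, costing $\w(e_\mathit{worst})^2$ comparisons of cost $t_\leq$ — actually $\mathcal{O}(\w(e_\mathit{worst}))$ comparisons per retained element suffices, contributing the $\w(e_\mathit{worst}) \cdot t_\leq$ term. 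Attacker positions are cheaper and subsumed by the same bound. Multiplying iteration count $\times$ positions $\times$ successors $\times$ per-successor cost gives the stated time bounds.

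I expect the main obstacle to be pinning down the iteration-count argument rigorously, specifically the interplay between the $\relSize{G}$ factor (path length for information propagation) and the $\h(e_\mathit{worst})$ factor (stabilization of an individual Pareto front). One must argue that after at most $\relSize{G} - 1 + \h(e_\mathit{worst})$ — or a product thereof — full sweeps, no $\varname{win}[g]$ changes, using a potential-function or progress-measure argument on the multiset of stored energies ordered by the well-founded order on $\E$; the subtlety is that a front can both gain new (incomparable) elements and have existing elements decrease, so the progress measure has to account for both, and one has to be careful that the $\preceq$-monotonicity of $\iteration$ (established in the proof of \autoref{thm:decidable}) guarantees the iterates form an ascending chain that must stabilize by well-foundedness. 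The declining case is where this simplifies dramatically and should be treated separately. The per-iteration cost and the space bound, by contrast, are routine bookkeeping once the $e_\mathit{worst}$ upper bound is in place.
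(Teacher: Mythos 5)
Your overall decomposition is exactly the paper's proof: bound the number of while-loop sweeps, multiply by the cost of one sweep (dominated by the defender-position update, computed with the running-front subroutine over pairs drawn from fronts of size at most $\w(e_\mathit{worst})$, at cost $t_\circlearrowleft + t_{\sup} + \w(e_\mathit{worst}) \cdot t_\leq$ per pair), and obtain the space bound from $\relSize{G}$ antichains below $e_\mathit{worst}$. Your general iteration count ($\relSize{G}$ sweeps for propagation plus at most $\h(e_\mathit{worst})$ genuine changes per front, hence $\mathcal{O}(\relSize{G} \cdot \h(e_\mathit{worst}))$ productive sweeps) also matches the paper, including your correct identification of the chain-of-upward-closures subtlety.

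However, your justification of the declining case is wrong in a way that matters. You claim that for declining games ``reverse updates can never increase any component.'' The opposite is true: if $u(e) \leq e$ on $\dom(u)$, then every $e$ with $e' \leq u(e)$ satisfies $e' \leq e$, so $u^\circlearrowleft(e') = \min\{e \mid e' \leq u(e)\} \geq e'$; the Galois adjoints of declining updates are \emph{inflationary}. The correct reason the iteration count drops to $\mathcal{O}(\relSize{G})$ is precisely this inflationarity: a candidate generated along a path that revisits a position is $\geq$ the candidate obtained from the shortened path and is therefore discarded by $\Min$, so only simple paths contribute and all of them are processed within $\mathcal{O}(\relSize{G})$ sweeps --- the paper's ``cycles do not benefit the attacker.'' Note also that even your (incorrect) premise would not yield the claimed bound: ``the front at each position only grows'' bounds the number of changes per front by $\w(e_\mathit{worst})$, giving $\mathcal{O}(\relSize{G} \cdot \w(e_\mathit{worst}))$ sweeps rather than $\mathcal{O}(\relSize{G})$. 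The rest of the argument stands once this step is repaired.
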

\addtocounter{theorem}{-1}
\endgroup

\begin{proof}
\begin{sloppypar}
    If all successors of a defender position $g_d$ were previously assigned a non-empty set, then $g_d$ will be assigned a non-empty set in the next iteration. Similarly, if any successor of an attacker position $g_a$ was assigned a non-empty set, then $g_a$ will be assigned a non-empty set next. 
    After $\relSize{G}$ iterations the information, which positions are winnable for the attacker, has traveled from defender deadlocks to all other winnable positions, i.e.\ $\forall g \in G. \ \wina (g) \neq \varnothing \longleftrightarrow \mathsf{Iteration}^{\relSize{G}}(\mathbf{0})(g)  \neq \varnothing$. 
    This allows us to calculate an upper bound for the energies being assigned to positions during the run of the algorithm, i.e.\ $e_\mathit{worst}$. 

    If  $\mathcal{G}$ is declining, then cycles do not benefit the attacker and a fixed point is reached in $\mathcal{O}(\relSize{G})$ iterations. In both, declining and non-declining games, only sets in $\mathsf{antichain}_{\leq e_\mathit{worst}}$ are assigned to positions as candidates for minimal attacker winning budgets. 
    Each set in $\mathsf{antichain}_{\leq e_\mathit{worst}}$ can only truly be updated at most as many times as the maximal length of chains in $\mathsf{antichain}_{\leq e_\mathit{worst}}$ w.r.t.\ inclusion of upward-closures is long, i.e.\ at most ${\textstyle{\sup_\N} \lbrace i \in \N \ | \ \exists A_1,... ,A_{i} \in \mathsf{antichain}_{\leq e_\mathit{worst}}. \ \forall j \in \lbrace 1, ..., i-1 \rbrace. \  A_j \subsetneq \ \uparrow A_{j+1}\rbrace =\h (e_\mathit{worst})}$ times.
    Thus, in non-declining energy games the algorithm terminates after at most $1+\relSize{G} + \relSize{G} \cdot \h (e_\mathit{worst}) \in \mathcal{O} (\relSize{G} \cdot \h (e_\mathit{worst}) ) $ iterations of the while-loop. 

    The computing time needed for operations outside the while-loop is negligible. We now focus on the time needed to execute one iteration, where \autoref{code:attacker-pos-update}'s time is dominated by that of \autoref{code:defender-pos-update}.     
    The following procedure inspired by Brihaye and Goeminne~\cite{mult-reach-games} calculates  \autoref{code:defender-pos-update}, i.e.\ $\varname{win} [\varname{g}]$ for a defender position $\varname{g}$. 
    
    \begin{algorithm}[H]
  \Fn{$\varname{compute\_new\_win}(\mathcal{G}, \varname{old\_win}, \varname{g})$}{
    $\varname{new}[\varname{g}]:= \Min \E$

    \For{$g'$ with $\varname{g} \move{u} g'$\ \label{code:branching}}
    {

    $\varname{new}[\varname{g}] := \Min \lbrace \sup \lbrace e', u^{\circlearrowleft}(e_{g'}) \rbrace \ | \ e' \in \varname{new}[\varname{g}]\land  e_{g'} \in \varname{old\_win}[g'] \rbrace $\label{code:defender-one-pos}
    
    }
    
    \KwRet{$\varname{new}[\varname{g}]$}
  }
  \end{algorithm}

    Note that the repeated application of $\Min$ ensures that $\w(e_\mathit{worst})$ is an upper bound for the size of $\varname{new}[\varname{g}]$ as well as $\varname{old\_win}[g']$ for all $g'\in G$. Therefore, \autoref{code:defender-one-pos} in the procedure can be calculated in $\mathcal{O} \left( \w(e_\mathit{worst})^2 \cdot (t_{\circlearrowleft} + t_{\sup} + \w(e_\mathit{worst}) \cdot t_{\leq}) \right) $. 
    The for-loop in \autoref{code:branching} adds a factor of the branching degree $o$, while \autoref{code:begin-iteration} adds $\relSize{G}$.
    This yields a running time of \autoref{alg:game-algorithm} in $ \mathcal{O}\left(\relSize{G}^2 \cdot o  \cdot \h(e_\mathit{worst}) \cdot \w(e_\mathit{worst})^2 \cdot (t_\circlearrowleft + t_{sup} + \w(e_\mathit{worst}) \cdot t_{\leq} )\right)$ where $\h(e_\mathit{worst})$ may be omitted in the case of declining energy games.

    Since each $\varname{win}[g]$ calculated contains at most $\w(e_\mathit{worst})$ elements, the space needed to compute the output is proportional the space of $ \relSize{G} \cdot \w(e_\mathit{worst})$ energies. \qedhere
\end{sloppypar}
\end{proof}

\end{document}